\documentclass{article}
\usepackage{hchang}
\usepackage{hyperref}
\usepackage{comment}
\usepackage{tcolorbox}
\usepackage{color}
\usepackage{graphicx} % allow embedded images
  \setkeys{Gin}{width=\linewidth,totalheight=\textheight,keepaspectratio}
  \graphicspath{{graphics/}} % set of paths to search for images
\usepackage{booktabs} % book-quality tables
\usepackage{units}    % non-stacked fractions and better unit spacing
\usepackage{multicol} % multiple column layout facilities
\usepackage{lipsum}   % filler text
\usepackage{fancyvrb} % extended verbatim environments
  \fvset{fontsize=\normalsize}% default font size for fancy-verbatim environments
\usepackage{cleveref}
\usepackage{rotating}
\usepackage{transparent}
\usepackage{multicol}
\usepackage{tikz}
\usepackage{tcolorbox}
\usepackage{fullpage}
\usepackage{thmtools} 
\usepackage{thm-restate}
\usepackage{soul}
\usepackage{dashrule}

\usepackage{algorithmicx}
\usepackage{algpseudocode}
\usepackage{algorithm}
\usepackage{tikz}
\usetikzlibrary{decorations.pathreplacing,calc}
\algtext*{EndIf}
\algtext*{EndFor}
\algtext*{EndProcedure}

\usepackage{xcolor}
\newcommand{\lightgray}[1]{\textcolor{gray}{#1}}
\newcommand{\blue}[1]{{\color{blue}#1}}

\newtheorem{lemma}{Lemma}
\newtheorem{theorem}{Theorem}

\newtheorem{conjecture}{Conjecture}

\newcommand{\real}{\mathbb{R}}
\newcommand{\pr}{\mathrm{Pr}}

\newcommand{\cK}{\mathcal{K}}

\title{A Separator for Minor-Free Graphs Beyond the Flow Barrier}

\author{Hung Le\\ University of Massachusetts, Amherst}
\date{}

\begin{document}
\maketitle
\begin{abstract}

In 1990, Alon, Seymour, and Thomas~\cite{AST90} gave the first balanced separator of size $O(h^{3/2}\sqrt{n})$ for any $K_h$-minor-free graph, which has had numerous algorithmic applications. They conjectured that the size of the balanced separator can be reduced to $O(h\sqrt{n})$, which is asymptotically tight. Two decades later, Kawarabayashi and Reed~\cite{KR10} constructed a separator of size $O(h\sqrt{n} + f(h))$ based on the graph minor structure theorem, where $f(h)$ is an extremely fast-growing function typically seen in the structure theorem; their separator's size is only better than that of  Alon, Seymour, and Thomas for a very small value of $h$.  Recently, Spalding-Jamieson~\cite{Spalding25} constructed a separator of size $O(h\log h \log\log h \sqrt{n})$; the technique is rooted in concurrent flow-sparsest cut duality.   Spalding-Jamieson's separator comes very close to $O(h\log h \sqrt{n})$, which is the barrier for techniques based on the flow-cut duality.

In this work, we first observe that plugging in the recent padded decomposition by Filtser and Conroy~\cite{CF25} into the flow-based algorithm of Korhonen and Lokshtanov~\cite{KH24} yields a balanced separator of size $O(h\log h \sqrt{n})$, matching the flow barrier. This result motivates the question of whether the flow barrier can be broken, which would be a stepping stone toward resolving the conjecture of Alon, Seymour, and Thomas~\cite{AST90}. The main result of our work is a positive answer to this question: we construct a balanced separator of size $O(h \sqrt{\log h} \sqrt{n})$. Surprisingly, perhaps, our algorithm is still based on the iterative framework of Alon, Seymour, and Thomas~\cite{AST90}, although a key component of their algorithm within this framework, called the neighborhood bound, was shown to be tight~\cite{Kotlov00}. Our new idea is to incorporate a low-diameter decomposition into the framework, which allows us to reduce the neighborhood bound by a factor of $h$, at the cost of a factor $\log h$. As a result, we improve the $\sqrt{h}$ factor to $\sqrt{\log h}$ in the final separator's size. Erasing $\sqrt{\log h}$ entirely seems to require substantially new ideas.
    
\end{abstract}
\section{Introduction}

The two most powerful structural results in minor-free graphs that have numerous algorithmic applications are the graph minor structure theorem~\cite{RS03} and the separator theorem~\cite{AST90}. The structure theorem says roughly that one can decompose a $K_h$-minor-free graph into a tree structure of pieces which are simple, with the simplicity is parameterized by a function $f(h)$, an extremely fast-growing function~\cite{RS03}. A recent breakthrough~\cite{GSW25} reduces $f(h)$ to a polynomial function of $h$, though the proof remains very complicated, spanning hundreds of pages, and the degree of the polynomial remains impractically large. Specifically,  $f(h) = O(h^{2300})$.  Determining the exact polynomial dependency of $f(h)$ on $h$ (which could well be very practical) and/or giving a simpler construction remains a long-standing open problem~\cite{Lovsz2005}.  On the other hand, separator theorems often have very small dependence on $h$, and their proofs are often much simpler.

A \EMPH{balanced separator} of a graph $G = (V,E)$ is a subset $X\subseteq V$ such that the largest connected component of $G\setminus X$ has size at most $2n/3$. Alon, Seymour, and Thomas~\cite{AST90} were the first to show that any $K_h$-minor-free graph admits a balanced separator of size $O(h^{3/2}\sqrt{n})$, which can be found in polynomial time of $O(m\sqrt{nh})$.   Regarding the dependence on $h$ of the separator size, they remarked:
\begin{quote}
   ``We think that the expression $h^{3/2}n^{1/2}$ [...] is not the best possible, and that  $O(hn^{1/2})$ is the correct answer, but have not been able to decide this. [...] Every  3-regular expander with $n$ vertices is a graph with no $K_h$-minor for  $h = cn^{1/2}$, and with no separator of size $dn$, for appropriately  chosen positive constants $c$ and $d$; and hence the estimate $O(hn^{1/2})$  would be the best possible.''
\end{quote}

A few years later, Plotkin, Rao, and Smith~\cite{PRS94} constructed a balanced separator of size $O( h n\sqrt{\log n})$ which has an optimal dependence on $h$ but a larger dependence on $n$. This separator size is suboptimal in the regime $h = O(1)$, which includes well-studied classes, such as planar graphs and graphs with constant genus. Nevertheless, the bound by Plotkin, Rao, and Smith is strong evidence supporting the Alon-Seymour-Thomas conjecture.

\begin{conjecture}[Alon-Seymour-Thomas Conjecture~\cite{AST90}]\label{conj:separator} For any $h\geq 3$ and $n\geq 1$, any $K_h$-minor-free graph of $n$ vertices has a balanced separator of size at most $c \cdot h \sqrt{n}$ for some positive constant $c$.
\end{conjecture}

This conjecture saw very little progress for almost 20 years, until Kawarabayashi and Reed claimed a solution in a FOCS 2010 conference paper~\cite{KR10}.  However, as recently remarked by~\cite{BKLLM25}, one of the authors later clarified in a talk ~\cite{Kawara11} that the precise bound is $O(hn^{1/2} + f(h))$ where $f(h)$ is the function in the graph minor structure theorem. The full version of their paper remains unpublished.  And it is unclear if the recent breakthrough on the structure theorem~\cite{GSW25} can be applied to reduce $f(h)$ to a polynomial in $h$. Even if it does, then $f(h)$ remains a very large polynomial of $h$, which gives a vacuous bound when $h = n^{1/2300}$. Moreover, the construction would not be elementary, which has been regarded as the key advantage of separator theorems~\cite{AST90,PRS94,BKLLM25} over the structure theorem. In any case, \Cref{conj:separator} remains open in its full generality.

In a different direction, Biswal, Lee, and Rao~\cite{BLR10} studied spectral partitioning algorithms for $K_h$-minor-free graphs. Their technique gave a different algorithm for constructing a balanced separator of size $O(h^{3}\sqrt{n \log h})$ based on the duality between concurrent flow and (uniform) sparsest cut. While this separator bound is worse than that of  Alon, Seymour, and Thomas~\cite{AST90} by a factor of $h^{3/2}\sqrt{\log h}$, their framework made an important link between padded decomposition and balanced separators via concurrent flow. As of their writing, the best padding parameter was $\beta(h) = O(h^2)$. Recently, Spalding-Jamieson~\cite{Spalding25} studied reweighted spectral graph partitioning for special classes of graphs, and noticed that using the recent padded decomposition by Conroy and Filtser~\cite{CF25}, which has $\beta(h) = O(\log h)$, the separator bound can be improved to $O(h \log(h)\log\log h \sqrt{n})$ (Corollary 1.6. in~\cite{Spalding25}). We note that the proof by Spalding-Jamieson~\cite{Spalding25} via the spectral technique is somewhat complicated.  Here,  we give
a direct proof relating the padded decomposition parameter and the size of the balanced separator, and on the way, remove the $\log\log(h)$ factor from the separator bound by  Spalding-Jamieson~\cite{Spalding25}.

\begin{restatable}{theorem}{FlowSeparator}\label{thm:minor} For any $n$-vertex  $K_h$-minor-free graph, there exists a separator of size $O(h\log(h)\sqrt{n})$ that can be found in randomized polynomial time. 
\end{restatable}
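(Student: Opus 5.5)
We combine the Korhonen--Lokshtanov flow framework~\cite{KH24} with the padded decomposition of Conroy and Filtser~\cite{CF25}, whose padding parameter is $\beta = O(\log h)$. The target inequality is a lower bound on sparsest-cut-type ratios: for any $K_h$-minor-free graph $H$ with vertex weights and any vertex subset $W$, either there is a small separator, or $W$ can be routed concurrently as a vertex-capacitated multicommodity flow with congestion at most roughly $\beta\cdot|V(H)|/(\text{demand scale})$. In the second case, a flow-embedding argument produces a $K_h$-minor, which is a contradiction. Concretely, the plan has three steps.

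\textbf{Step 1 (the framework).} We recall the reduction used in~\cite{KH24}: it suffices to show that for every $K_h$-minor-free $G$ and every set $T$ of terminals, either (a) there is a $T$-balanced vertex separator of size $O(h\log h\sqrt n)$, or (b) there is a vertex-capacitated concurrent flow among all pairs of (a constant fraction of) the terminals with small congestion. We invoke vertex-capacitated flow--cut duality, i.e., the LP dual with vertex lengths $\ell:V\to\mathbb R_{\ge0}$. If the concurrent flow value is small, the dual gives lengths $\ell$ with $\sum_v \ell(v)\le \sqrt n\cdot h\log h$ (after scaling) such that the sum over terminal pairs of $\mathrm{dist}_\ell(s,t)$ is large, say at least $|T|^2$.

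\textbf{Step 2 (rounding the lengths via padded decomposition).} This step turns the dual solution into a separator. We apply the Conroy--Filtser padded decomposition to the vertex-weighted shortest-path metric $\mathrm{dist}_\ell$ at scale $\Delta$ chosen so that a constant fraction of terminal pairs are at distance more than $\Delta$. The decomposition has clusters of diameter at most $\Delta$, and each vertex $v$ is cut, i.e., its $\ell(v)$-ball is not contained in a single cluster, with probability $O(\beta\,\ell(v)/\Delta)$. The set $X$ of cut vertices then has expected size $O(\beta\sum_v\ell(v)/\Delta)$. The components of $G\setminus X$ lie inside clusters and hence have small terminal diameter, which forces the terminals to be spread among many clusters. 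A standard grouping argument in the style of region growing / Leighton--Rao then yields a balanced separator. Choosing $\Delta$ proportional to the average terminal distance gives size $O(\beta\cdot h\sqrt n)=O(h\log h\sqrt n)$. I expect this step to be the main obstacle. Padded decompositions are stated for edge-weighted metrics, so the vertex-length version needs care: either subdivide edges so that vertex lengths become edge lengths, or use the fact that~\cite{CF25} handles vertex-weighted shortest paths while preserving minor-freeness. Balancing also requires iterating when one cluster holds most of the terminals.

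\textbf{Step 3 (the flow case gives a minor).} If instead a concurrent flow of value $\Omega(1/(h\log h\sqrt n))$ per pair exists, we follow~\cite{KH24} (and the embedding argument of~\cite{BLR10}). With $|T|\approx h\sqrt n$ terminals, we decompose the flow into paths, randomly choose $h$ groups of terminals, and show that the flow paths yield $h$ connected, pairwise touching branch sets, i.e., a $K_h$-minor, contradicting the hypothesis. Iterating as in the Alon--Seymour--Thomas framework, starting with $T=V$, produces the separator of size $O(h\log h\sqrt n)$. Every step (the LP, the sampling of the decomposition, and the path decomposition) runs in randomized polynomial time, which gives the algorithmic claim.
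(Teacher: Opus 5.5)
Your architecture matches the paper's: vertex-capacitated flow--cut duality, rounding with the $O(\log h)$-padded decomposition of Conroy--Filtser, and a contradiction from the flow case via a random minor embedding. However, the step that makes the bound \emph{linear} in $h$ is missing. In Step~3 you glue flow paths into $h$ pairwise-touching branch sets built from random groups of terminals. Any such construction needs $\Theta(h^2)$ sampled flow paths, and paths serving disjoint pairs of branch sets must be disjoint. Take a flow of congestion $\gamma$ on at least $2n/3$ vertices. A flow path between random endpoints hits a fixed vertex with probability $O(\gamma/n^2)$, so two independent paths collide with probability $O(\gamma^2/n^3)$. A union bound over the $\Theta(h^4)$ pairs then only works for $\gamma=O(n^{3/2}/h^2)$, which gives a separator of size $O(h^2\log h\sqrt n)$. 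Even the Lov\'asz Local Lemma applied directly to $K_h$ has dependency degree $\Theta(h^3)$, because every vertex has degree $h-1$.

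The paper instead embeds $\Ddot{H}$, obtained from $K_h$ by subdividing every edge twice and replacing each vertex by a binary tree with $h-1$ leaves. This graph has maximum degree $3$ and contains $K_h$ as a minor. It suffices to find an almost minor embedding, in which only paths of edges with no common endpoint must be disjoint. Map vertices uniformly at random and edges to flow-sampled paths. Each bad event then has probability at most $81\gamma^2/(16n^3)$ and depends on only $O(h^2)$ others, so the Local Lemma already succeeds at $\gamma=n^{3/2}/(ch)$. Without this bounded-degree replacement plus the Local Lemma, your Step~3 does not give the claimed bound.

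Two further points need repair. First, your placement of $\log h$ is inconsistent. Step~1 allows $\sum_v\ell(v)\le h\log h\sqrt n$ and Step~3's flow threshold carries a $\log h$, yet Step~2 concludes $O(\beta h\sqrt n)$. The flow threshold must be congestion $\Theta(n^{3/2}/h)$ with no $\log h$, since that is exactly what the minor argument tolerates. Then the dual lengths total $O(h\sqrt n)$ once the average pair distance is normalized to $1$, and $\log h$ enters only through the padding parameter. Otherwise you either break the minor argument by a $\log h$ factor or end at $O(h\log^2 h\sqrt n)$.

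Second, as you suspect, Step~2 is incomplete. You cannot in general choose $\Delta$ proportional to the average distance so that a constant fraction of pairs are farther apart than $\Delta$. The distance sum can be dominated by few vertices while a single cluster of diameter $\Delta$ holds almost everything. What the rounding yields cleanly is a \emph{sparse}, possibly unbalanced, vertex cut. The paper obtains it from Rabinovich's theorem together with Theorem~4.1 of Feige--Hajiaghayi--Lee. Rabinovich's theorem says padding $\beta$ gives average distortion $O(\beta)$ into the line, after vertex weights $\omega$ are converted to edge weights $(\omega(u)+\omega(v))/2$. The paper then recurses on the larger side and charges $\sum_i|S_i|\le O(\log h/\gamma)\sum_i|A_i||B_i|\le O(n\log h/\gamma)\sum_i|B_i|=O(n^2\log h/\gamma)$. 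The flow case may therefore arise on an induced subgraph with at least $2n/3$ vertices, which the minor argument handles. An AST-style iteration over terminal sets, with the unexplained $|T|\approx h\sqrt n$, is not what is needed here.
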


Our proof of \Cref{thm:minor} is by adapting the proof of the separator theorem for $K_h$-induced minor-free graphs by Korhonen and  Lokshtanov~\cite{KH24}. In particular, we plug in the recent padded decomposition with a padding parameter $\beta(h) = O(\log h)$ by Conroy and Filtser~\cite{CF25} to replace $O(\log n)$ factor with $O(\log h)$ in the duality between concurrent flow and sparsest cut;  see \Cref{sec:flow} for details. 

The $\log(h)$ factor in \Cref{thm:minor} is precisely the padding parameter, which is also the gap between concurrent flow and (uniform) sparsest cut. Conceivably, if one can improve the gap, one gets a smaller separator. However, the $\log(h)$ gap is tight: for a degree-3 expander with at most $h^2$ vertices, which excludes $K_{O(h)}$ as a minor, has the flow-cut gap $\Omega(\log(h))$. Thus, the flow-based technique~\cite{BLR10,Spalding25,KH24} has a natural barrier of $\Omega(h \log h \sqrt{n})$, and \Cref{thm:minor} is optimal for this class of techniques.

In this paper, we set out to bypass the flow barrier. Our main result is a new combinatorial\footnote{Flow-based techniques use linear programming or other numerical methods, including SDP or spectral techniques, for recursively computing a sparse vertex cut, whose union is a balanced separator.} algorithm for constructing a balanced separator in the following theorem. 

\begin{table}[h]
\centering
\renewcommand{\arraystretch}{1.4} % Adjust this value for more/less vertical space
\begin{tabular}{r|r|l}
\hline
\textbf{Separator Size} & \textbf{References} & \textbf{Notes} \\ \hline
$O(h^{3/2} \sqrt{n})$ & \cite{AST90} & First balanced separator of size $O(\sqrt{n})$ for a fixed $h$ \\
$O(h \sqrt{n \log n})$ & \cite{PRS94} & Suboptimal by a factor of $\sqrt{\log n}$ for a fixed $h$ \\
$O(h \sqrt{n} + f(h))$ & \cite{KR10} & $f(h)$ is a fast growing function \\
$O(h^3 \sqrt{\log h} \sqrt{n})$ & \cite{BLR10} & Flow-cut duality \\
$O(h \log h \log \log h \sqrt{n})$ & \cite{Spalding25} & Flow-cut duality, using $O(\log h)$ padding parameter~\cite{CF25} \\
$O(h \log h \sqrt{n})$ & \Cref{thm:minor} &  Flow-cut duality, plugging~\cite{CF25} into~\cite{KH24}, matching flow barrier \\
$O(h \sqrt{\log h}\sqrt{n})$ &  \Cref{thm:Main} & Main result, bypassing the flow barrier \\
$O(h \sqrt{n})$ &  \Cref{conj:separator}~\cite{AST90} & Open \\\hdashline
$O(h^{13} \sqrt{n})$ & \cite{BKLLM25} & First linear-time algorithm for fixed $h$ \\
\end{tabular}
\caption{Results on balanced separators of $K_h$-minor-free graphs.}
\label{table:results}
\end{table}

\begin{theorem}\label{thm:Main}  For any $n$-vertex $K_h$-minor-free graph, there exists a separator of size $O(h \sqrt{\log h}\sqrt{n})$ that can be found in randomized polynomial time. 
\end{theorem}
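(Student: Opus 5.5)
We follow the iterative framework of Alon, Seymour, and Thomas~\cite{AST90}. We maintain a family $\mathcal{A}=\{A_1,\dots,A_k\}$, $k<h$, of vertex-disjoint, pairwise touching connected subgraphs (a partial $K_k$ model), each of size at most a threshold $\ell$. In each round we let $X=\bigcup_i A_i$ and look at the components of $G\setminus X$. If some component $C$ has more than $2n/3$ vertices and is adjacent to every $A_i$, we try to grow the model by adding a new branch set inside $C$. This branch set must be a small connected subgraph of $C$ touching all the $A_i$'s. If no such small subgraph exists, a Menger-type argument gives a small set $S$ separating the $A_i$'s inside $C$, and we use $S$ to shrink some $A_i$ or discard it. Since $G$ is $K_h$-minor-free, $k<h$ always holds, and a potential function shows the process ends with a balanced separator of size about $h\ell+n/\ell$. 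In~\cite{AST90}, the neighborhood bound (a touching tree has size $O(h\sqrt{n})$-ish, which is tight by~\cite{Kotlov00}) forces $\ell\approx\sqrt{n/h}$. That choice gives the total $h\ell + hn/\ell\cdot(\text{stuff}) = h^{3/2}\sqrt n$.

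The new ingredient is a low-diameter decomposition. Before trying to add a branch set inside the big component $C$, I apply to $C$ a padded decomposition with diameter $\Delta$ and padding $\beta=O(\log h)$, taken from Conroy and Filtser~\cite{CF25}. Each cluster has strong diameter $\Delta$. A random decomposition cuts each edge with probability $O(\log h/\Delta)$, so the expected number of boundary vertices is $O(n\log h/\Delta)$. These boundary vertices are charged to the separator. Within a cluster, any connected subgraph reaching all $k$ current branch sets can be replaced by a union of at most $k$ shortest paths from a center, which has size $O(k\Delta)$. In effect, the diameter bound replaces the neighborhood bound: the costly $h$ factor from~\cite{Kotlov00} is traded for the $\log h$ coming from padding. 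Concretely, I plan these steps: (i) find a cluster that is big, or else the clustering is itself balanced after removing its boundary; (ii) inside that cluster, either build a new branch set of size $O(h\Delta)$ touching all $A_i$, or obtain via Menger a separator of size $O(\text{branch-set size})$ that reduces the AST potential $\sum|A_i|$ or the size of the big component; (iii) iterate.

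For the accounting, the final separator consists of three parts. The first is the current model, of size at most $h\cdot h\Delta$. The second is the Menger cuts, of total size $O(n/\text{(branch size)})\cdot(\text{cut size})$, amortized as in AST. The third is the LDD boundaries, $O(n\log h/\Delta)$, but only if the padded decomposition is applied a bounded number of times, or with geometrically shrinking components. The AST argument suggests targeting a model of total size $h\cdot\ell$ with $\ell\approx h\Delta$ per branch set. Balancing $h^2\Delta$ against $n\log h/\Delta$ would give $h\sqrt{n\log h}$, which is exactly the claimed bound. So I would take $\Delta=\sqrt{n\log h}/h$.

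The main obstacle is the amortization. A fresh decomposition in every round would pay $n\log h/\Delta$ repeatedly. To avoid this, I would fix one decomposition of the whole graph up front, or re-decompose only when the big component shrinks by a constant factor, giving a geometric sum. I would then check that the AST potential argument (each round either adds a branch set or strictly decreases $\sum_i|A_i|$ by discarding vertices into the separator at a bounded rate) still works when branch sets are restricted to lie within a single cluster plus edges to existing branch sets. I would also verify that the Menger cut found inside a cluster has size at most the current branch-set size bound $O(h\Delta)$ and not a larger quantity. The randomness enters only through the expected boundary size, so we repeat until the bound holds, which gives randomized polynomial time.
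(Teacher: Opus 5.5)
Your skeleton matches the paper's, and so does your parameter choice: your $\Delta=\sqrt{n\log h}/h$ is the paper's $\Delta=\ell\log h$ with $\ell=\sqrt n/(h\sqrt{\log h})$. There is a genuine gap, though. The step that carries the whole improvement is left unproved. When some branch set touches the big component only far from the cluster's center, you invoke ``a Menger-type argument'' to get a cut of size $O(h\Delta)=O(\sqrt{n\log h})$. Nothing in your setup provides this. The AST/Menger dichotomy trades a touching tree of size $r$ against a cut of size about $hn/r$. With $r=\Theta(h\Delta)$ that is about $n/\Delta=h\sqrt{n/\log h}$ per branch set, a factor $h/\log h$ too large, and Kotlov's result says this trade-off cannot be beaten in general. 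Knowing that one cluster has small diameter says nothing, by itself, about branch sets whose neighbors lie outside it. Moreover, your accounting puts every bounded-size branch set into the separator, which leaves no room for the actual remedy.

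The missing idea is the paper's BFS-layer argument from the center $v$ of a dense ball, $|B_H(v,\Delta)|\ge 2n/3$. Take $\ell^*=(\log h+1)\ell$ middle layers beyond radius $\Delta$, plus $\ell$ further layers. Suppose a branch set $C$ has all its neighbors beyond these. There are two cases. If the far part $B$ has at most $n/h$ vertices, one of the $\ell$ extra layers has at most $n/(h\ell)$ vertices; $C$ is grown up to that layer, giving $|N_H(C)|\le n/(h\ell)$. This is the factor-$h$ improvement of the neighborhood bound. If $|B|>n/h$, then $(1+1/\ell)^{\ell^*}\ge h$ forces a middle layer with $|L_{i^*}|\le |L_{\ge i^*+1}|/\ell$. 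That layer is put into $X$ and charged to the vertices beyond it, which never return because the $2n/3$ vertices of the ball lie on the other side. Each branch set is then charged $\min\{|V(C)|,|N_H(C)|\}$, since grown branch sets can become arbitrarily large.

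Your amortization fix is also not sound as stated. A decomposition fixed up front certifies distances only in the graph at that moment. Refreshing it only when the big component shrinks by a constant factor is effectively the same thing, since that happens $O(1)$ times while the component stays above $2n/3$. Once branch sets are carved out, short paths from the center may be forced through existing branch sets. New branch sets then can no longer be assembled from $O(\Delta)$-length paths.

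The paper sidesteps this entirely. It recomputes the LDD on the current $H$ in every iteration, but adds the boundary set $S$ to the output only in the terminating case, where $H\setminus S$ has no component larger than $2n/3$. Otherwise $S$ is discarded and serves only to certify the dense ball, so the $O(n\log h/\Delta)$ cost is paid exactly once. Finally, the decomposition of~\cite{CF25} gives weak, not strong, diameter. Paths should therefore be taken in $H$ by BFS from $v$, not inside the cluster.
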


 \Cref{thm:Main} cuts the gap between the flow-based techniques and \Cref{conj:separator} quadratically, from $\log(h)$ to $\sqrt{\log h}$. See \Cref{table:results} for an overview of existing bounds. Our proof is elementary, which we view as a stepping stone towards a full resolution of \Cref{conj:separator}.  Astute readers might realize that the dependence on $h$ in \Cref{thm:Main} is coincidentally the same as the dependence on $h$ in the maximum number of edges of a $K_h$-minor-free graph, which is $O(h\sqrt{\log h} n)$. The factor $h\sqrt{\log h}$ in the latter bound is tight~\cite{Kostochka82}, and realized by the Erdős–Rényi random graph $\mathcal{G}(n,p)$ with $p=1/2$~\cite{BCE80}. For $\mathcal{G}(n,p = 1/2)$, the clique minor size is $\Theta(n/\sqrt{\log n})$~\cite{BCE80}, so in this case, the separator bound $h \sqrt{n} = \Omega(n^{3/2}/\log(n))$ is vacuous. When $p = C/n$ for any constant $C > 1$, the size of the clique minor in $\mathcal{G}(n,p)$ is $\Theta(\sqrt{n})$ w.h.p~\cite{FKO08} and the separator size is $\Theta(n)$, which is also within the bound of \Cref{conj:separator}.  
 
\subsection{Other related work}

In this paper, we focus solely on the existence of a small balanced separator and make no attempt to minimize the running time. There is a long line of work on the computational aspect of balanced separators~\cite{AST90,PRS94,RW09,KR10,WulffNilsen11,wulffnilsen14,BKLLM25}. The goal is to design a faster algorithm, ideally in linear time, possibly at the cost of higher dependence on $h$ in the separator's size. The recent work by Bonnet, Korhonen, Le, Li, and Masařík~\cite{BKLLM25} gives the first linear-time algorithm when $h$ is a constant: the separator size is $O(h^{13}\sqrt{n})$ and the running time is $O(h^{13}n)$. We refer readers to their paper for an in-depth discussion of existing techniques for fast algorithms. A major open question is to obtain the best of both: a linear-time algorithm for computing a balanced separator of size $O(h\sqrt{n})$, which remains out of reach.

\subsection{Proof overview}\label{subsec:overview}

In this section, we give an overview of the proof of \Cref{thm:Main}. An important concept is a \EMPH{minor model}, denoted by $\mathcal{K}$, which is a collection of vertex-disjoint subgraphs, called \EMPH{branch sets}, such that every two branch sets $C, C' \in \mathcal{K}$ have an edge between them. The \EMPH{size} of the minor model is the number of branch sets it contains. A graph $G$ is $K_h$-minor-free if and only if it does not contain a minor model of size $h$.  We first describe an iterative algorithmic template for constructing balanced separators by Alon, Seymour, and Thomas~\cite{AST90} (though our presentation differs from the original by Alon, Seymour, and Thomas~\cite{AST90}). This template captures the algorithm by Alon, Seymour, and Thomas~\cite{AST90} and other variants~\cite{PRS94,WulffNilsen11}; see also~\cite{Le23}. Some components of the template will be redundant in one algorithm but are necessary in another. And all of them will be necessary in our final algorithm.

The algorithm iteratively maintains a minor model \EMPH{$\mathcal{K}$}, a set of vertices $\EMPH{$X$} \subseteq V$, and the largest connected component \EMPH{$H$} if one removes the ``current separator'' from the graph $G$. The current separator contains $X$ and some other vertices, denoted by \EMPH{$F(\mathcal{K}, H)$}; the precise definition of $F(\mathcal{K}, H)$ varies across algorithms. The key point is that it depends on the structure of $\mathcal{K}$ and $H$. Every branch set in $\cK$ has at least one neighbor in $H$. As long as $|V(H)| > 2n/3$ (which means the current separator is not balanced), the algorithm will shrink $H$ and add more vertices to either $X$ or $\mathcal{K}$, which effectively also updates the current separator. If at any point, the size of the minor model $\mathcal{K}$ reaches $h$, then it is a certificate that $G$ is not $K_h$-minor-free. 

Also, the template takes a parameter $\ell > 0$ as an input, and finds a separator whose size depends on $\ell$. In the two  algorithms most relevant to our work, namely, the Alon-Seymour-Thomas (AST) algorithm~\cite{AST90} and the Plotkin-Rao-Smith (PRS) algorithm~\cite{PRS94}, the separators' sizes are:
\begin{equation}\label{eq:AST-PRS}
    O\left(\frac{nh}{\ell} + h^2\ell\right) \quad \text{(AST algorithm \cite{AST90})} \qquad \&\qquad O\left(\frac{n}{\ell} + h^2\ell \log n\right) \quad \text{(PRS algorithm \cite{PRS94})} 
\end{equation}
In particular, choosing $\ell = \sqrt{n/h}$ in the former algorithm, and $\ell = \sqrt{n}/(h\sqrt{\log n})$ in the latter algorithm, we get separators of sizes $O(h^{3/2}\sqrt{n})$ and $O(h\sqrt{n\log n})$, respectively. 

In the AST algorithm~\cite{AST90}, in every step, either one can find (\EMPH{Case I}) a connected subgraph of $H$ with $O(h \ell)$ vertices that contains at least one neighbor of each branch set in $\cK$ or (\EMPH{Case II}) a subset $S$ of size $n/\ell$ such that at least one (small) connected component of $H\setminus S$ contains all the neighbors in $H$ of some branch set, say $C$, in $\mathcal{K}$. In the former case, it adds to $\mathcal{K}$ a new branch set of size $O(h \ell)$ and in the latter case, it extends an existing branch set $C\in \mathcal{K}$ such that in the next iteration $|N_{H}(C)| = O(n/\ell)$ where \EMPH{$N_{H}(C)$} is the set of neighbors of $C$ in $H$. The final separator, when the algorithm stops, contains for every branch set $C\in \cK$, either $V(C)$ or $N_H(C)$, whichever set has smaller size. (The AST algorithm does not maintain $X$.) 
Since $|\mathcal{K}| \leq h$, the former case contributes $O(h^2\ell)$, while the latter case contributes  $O(hn/\ell)$, to the separator's size in \Cref{eq:AST-PRS}. Interestingly, Kotlov~\cite{Kotlov00} shows that the trade-off $O(h \ell)$ in case(i) vs $O(n/\ell)$ in case (ii) of the AST algorithm is best possible. In other words, the \EMPH{neighborhood bound} $|N_{H}(C)| = O(n/\ell)$ is tight.

In the PRS algorithm~\cite{PRS94}, there are also two cases in every step. (Case i) if the radius of $H$ is at least $2\ell \ln(n)$, then one can add to $X$ a subset of vertices $S$ such that $|S| \leq \frac{|V(\Tilde{H})\setminus V(H)|}{\ell}$, where $\Tilde{H}$ is the largest component in the \underline{next iteration}.  (Case ii) Otherwise, the radius of $H$ is at most $2\ell \ln(n)$, then one can add to $\mathcal{K}$ a new branch set, which composes of at most $h$ shortest paths of length at most $2\ell \ln(n)$ each, where each path is connected to an existing branch set in $\mathcal{K}$. The final separator is $X\cup \left(\cup_{C\in \cK}V(C)\right)$.  To bound the size, in case i, the main observation is that $V(\Tilde{H})\setminus V(H)$ will not be involved in subsequent iterations, and therefore, by charging the size of $S$ to $V(\Tilde{H})\setminus V(H)$, every vertex is charged by $1/\ell$ amount, and it is charged at most once. This explains the first term $O(n/\ell)$ in \Cref{eq:AST-PRS}. In case ii, the branch set has size $O(h\ell \log n)$, and therefore, the total number of vertices in all (and at most $h$) branch sets is $O(h^2\ell \log n)$, which explains the second term in  \Cref{eq:AST-PRS}. Here we remark that the $\log(n)$ factor is inherent in the first step, since one has to bound the size of $S$ by vertices that are not in the largest component of the next step to avoid charging twice.

From the description above,  $F(\mathcal{K}, H) = \cup_{C\in \cK}V(C)$ in the PRS algorithm, while in AST algorithm, $F(\mathcal{K}, H) = \cup_{C\in \cK} \arg\min \{|V(C)|, |N_H(C)|\}$.

Our algorithm also fits the iterative algorithmic template described above and has four cases, corresponding to four steps. The first step is a new twist, which incorporates  the low-diameter decomposition (LDD) into the template. LDD says that one can remove a small number of vertices to get connected components of small weak diameter. Here the \EMPH{weak diameter} of a connected component $Y$ is $\max_{x\not= y \in Y} \delta_G(x,y)$, while the \EMPH{strong diameter} is $\max_{x\not= y \in Y} \delta_{G[Y]}(x,y)$.

\begin{figure}[!htb]
    \centering
    \resizebox{1.0\textwidth}{!}{%  % Set to 80% of text width, '!' maintains aspect ratio
        \begin{tikzpicture}[
  every node/.style={font=\small},
  vertex/.style={circle,draw,fill=white,minimum size=4.2mm,inner sep=0pt},
  root/.style={circle,draw,fill=white,minimum size=5.0mm,inner sep=0pt,font=\small},
  treeedge/.style={draw=gray!65,line width=.45pt},
  omittededge/.style={treeedge,dashed},
  layerline/.style={draw=gray!35,densely dotted},
  setbrace/.style={decorate,decoration={brace,amplitude=6pt},thick},
  smallbrace/.style={decorate,decoration={brace,amplitude=4pt,mirror},thick},
  plusbrace/.style={decorate,decoration={brace,amplitude=7pt,mirror},thick,dashed},
  layerlabel/.style={font=\scriptsize,anchor=north,align=center}
]

% Shaded blocks for the three parts of the partition.
\fill[blue!8,rounded corners=3pt]    (-0.55,-3.65) rectangle (5.35,3.65);
\fill[orange!12,rounded corners=3pt] (5.40,-3.65) rectangle (10.15,3.65);
\fill[green!9,rounded corners=3pt]   (10.20,-3.65) rectangle (18.15,3.65);

% The top ell layers of B, included in M^{+\ell}.
\fill[green!20,rounded corners=2pt]  (10.20,-3.65) rectangle (14.95,3.65);

% Vertical layer guides.
\foreach \x in {0,1.6,3.2,4.8,6.4,8.0,9.6,11.2,12.8,14.4,16.0,17.6}{
  \draw[layerline] (\x,-3.45) -- (\x,3.45);
}

% Layer names.
\node[layerlabel] at (0,-3.95)    {$L_0=\{v\}$};
\node[layerlabel] at (1.6,-3.95)  {$L_1$};
\node[layerlabel] at (3.2,-3.95)  {$\cdots$};
\node[layerlabel] at (4.8,-3.95)  {$L_\Delta$};

\node[layerlabel] at (6.4,-3.95)  {$L_{\Delta+1}$};
\node[layerlabel] at (8.0,-3.95)  {$\cdots$};
\node[layerlabel] at (9.6,-3.95)  {$L_{\Delta+(\log h +1)\ell}$};

\node[layerlabel] at (11.2,-3.95) {$L_{\Delta+(\log h +1)\ell+1}$};
\node[layerlabel] at (12.8,-3.95) {$\cdots$};
\node[layerlabel] at (14.4,-3.95) {$L_{\Delta+(\log h +1)\ell+\ell}$};
\node[layerlabel] at (16.0,-3.95) {$L_{\Delta+(\log h +1)\ell+\ell+1}$};
\node[layerlabel] at (17.6,-3.95) {$\cdots$};

% Representative BFS tree vertices.
\node[root] (v) at (0,0) {$v$};

\node[vertex] (a1) at (1.6, 1.15) {};
\node[vertex] (a2) at (1.6,-1.15) {};

\node[vertex] (d1) at (4.8, 2.75) {};
\node[vertex] (d2) at (4.8, 0.90) {};
\node[vertex] (d3) at (4.8,-0.90) {};
\node[vertex] (d4) at (4.8,-2.75) {};

\node[vertex] (m1) at (6.4, 3.00) {};
\node[vertex] (m2) at (6.4, 1.80) {};
\node[vertex] (m3) at (6.4, 0.60) {};
\node[vertex] (m4) at (6.4,-0.60) {};
\node[vertex] (m5) at (6.4,-1.80) {};
\node[vertex] (m6) at (6.4,-3.00) {};

\node[vertex] (n1) at (9.6, 2.60) {};
\node[vertex] (n2) at (9.6, 1.20) {};
\node[vertex] (n3) at (9.6, 0.00) {};
\node[vertex] (n4) at (9.6,-1.20) {};
\node[vertex] (n5) at (9.6,-2.60) {};

\node[vertex] (p1) at (11.2, 3.00) {};
\node[vertex] (p2) at (11.2, 1.80) {};
\node[vertex] (p3) at (11.2, 0.60) {};
\node[vertex] (p4) at (11.2,-0.60) {};
\node[vertex] (p5) at (11.2,-1.80) {};
\node[vertex] (p6) at (11.2,-3.00) {};

\node[vertex] (q1) at (14.4, 2.60) {};
\node[vertex] (q2) at (14.4, 1.20) {};
\node[vertex] (q3) at (14.4, 0.00) {};
\node[vertex] (q4) at (14.4,-1.20) {};
\node[vertex] (q5) at (14.4,-2.60) {};

\node[vertex] (r1) at (16.0, 2.10) {};
\node[vertex] (r2) at (16.0, 0.70) {};
\node[vertex] (r3) at (16.0,-0.70) {};
\node[vertex] (r4) at (16.0,-2.10) {};

% Edges of the BFS tree T. Dashed edges cross omitted layers.
\draw[treeedge] (v) -- (a1) (v) -- (a2);

\draw[omittededge]
  (a1) -- (d1) (a1) -- (d2)
  (a2) -- (d3) (a2) -- (d4);

\draw[treeedge]
  (d1) -- (m1) (d1) -- (m2)
  (d2) -- (m3)
  (d3) -- (m4)
  (d4) -- (m5) (d4) -- (m6);

\draw[omittededge]
  (m1) -- (n1) (m2) -- (n2)
  (m4) -- (n3) (m5) -- (n4)
  (m6) -- (n5);

\draw[treeedge]
  (n1) -- (p1) (n1) -- (p2)
  (n2) -- (p3)
  (n3) -- (p4)
  (n4) -- (p5)
  (n5) -- (p6);

\draw[omittededge]
  (p2) -- (q1) (p3) -- (q2)
  (p4) -- (q3) (p5) -- (q4)
  (p6) -- (q5);

\draw[treeedge]
  (q1) -- (r1)
  (q3) -- (r2)
  (q4) -- (r3)
  (q5) -- (r4);

\draw[omittededge]
  (r1) -- ++(.90,.35)
  (r2) -- ++(.90,.05)
  (r3) -- ++(.90,-.05)
  (r4) -- ++(.90,-.35);

% Ellipses inside the tree where layers are suppressed.
\node at (3.2, 1.90) {$\cdots$};
\node at (3.2,-1.90) {$\cdots$};
\node at (8.0, 2.10) {$\cdots$};
\node at (8.0,-2.10) {$\cdots$};
\node at (12.8, 2.10) {$\cdots$};
\node at (12.8,-2.10) {$\cdots$};
\node at (17.35,0) {$\cdots$};

% Braces for U, M, and B.
\draw[setbrace] (-0.55,4.10) -- (5.35,4.10)
  node[midway,above=7pt,align=center]
  {$U$\\[-1pt] upper};

\draw[setbrace] (5.40,4.10) -- (10.15,4.10)
  node[midway,above=7pt,align=center]
  {$M$\\[-1pt]
   middle};

\draw[setbrace] (10.20,4.10) -- (18.15,4.10)
  node[midway,above=7pt,align=center]
  {$B$\\[-1pt] bottom};

% Brace for the top ell layers of B.
\draw[smallbrace] (10.20,-4.65) -- (14.95,-4.65)
  node[midway,below=7pt,align=center]
  {top $\ell$ layers of $B$};

% Brace for M^{+\ell}.
\draw[plusbrace] (5.40,-5.65) -- (14.95,-5.65)
  node[midway,below=8pt,align=center]
  {$M^{+\ell}
    =M\cup \{\text{$\ell$ top layers of $B$}\}$};

%\node[anchor=west,align=left] at (-0.55,5.55)
  %{BFS tree $T$ rooted at $v$; each vertical column is one BFS layer.};

\end{tikzpicture}
    }
    \caption{Each $L_i$ is the $i$-th layer of the BFS tree rooted at $v$, starting from $L_0 = \{v\}$.}
    \label{fig:UMB-vis}
\end{figure}

\begin{restatable}[Low Diameter Decomposition (LDD)]{lemma}{LDDLem}\label{cor:separation} Let $G = (V,E)$ be an undirected, unweighted $K_h$-minor-free graphs. Let  $\Delta > 0$ be a parameter. Then we can find in randomized polynomial time a subset $S\subseteq V$ of vertices of size $O(\log(h)\frac{n}{\Delta})$ such that w.h.p. every connected component $G\setminus S$ has a weak diameter at most $\Delta$. 
\end{restatable}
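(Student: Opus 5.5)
The plan is to derive the lemma from the padded decomposition of Conroy and Filtser~\cite{CF25}. We take as given that for every $K_h$-minor-free graph with shortest-path metric $\delta_G$ and every $\Delta>0$ there is a distribution over partitions $\mathcal{P}$ of $V$ with two properties. First, every cluster has weak diameter at most $\Delta$. Second, for every vertex $v$ and every $\gamma\le 1/\beta$,
\[
\Pr\bigl[B_G(v,\gamma\Delta)\subseteq P(v)\bigr]\ \ge\ e^{-\beta\gamma}, \qquad \beta=O(\log h),
\]
and the distribution can be sampled in randomized polynomial time. The idea is to turn the partition into a vertex separator. We delete every vertex that is not padded by a single unit, i.e., every vertex $v$ whose ball $B_G(v,1)$, consisting of $v$ together with $N(v)$, is not contained in its own cluster.

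The steps run as follows.

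\textbf{Step 1.} Apply the decomposition with $\gamma = 1/\Delta$, assuming $\Delta\ge\beta$; otherwise take $S=V$, which already has size $n\le \beta n/\Delta$. The probability that a given $v$ is not $1$-padded is then at most $1-e^{-\beta/\Delta}\le \beta/\Delta$.

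\textbf{Step 2.} Let $S$ be the set of non-padded vertices. By linearity of expectation, $\mathbb{E}|S|\le \beta n/\Delta = O(\log h\cdot n/\Delta)$.

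\textbf{Step 3.} Check the diameter guarantee deterministically. Take any edge $uv$ of $G\setminus S$. Since $u$ is padded, $v\in B(u,1)\subseteq P(u)$, so $u$ and $v$ lie in the same cluster. Hence each connected component of $G\setminus S$ sits inside a single cluster, and its weak diameter is at most $\Delta$, measured in $G$. This is exactly the weak-diameter notion in the statement.

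\textbf{Step 4.} Upgrade the expectation bound to a w.h.p.\ statement. By Markov's inequality, $|S|\le 2\beta n/\Delta$ with probability at least $1/2$. Repeating the sampling $O(\log n)$ times and keeping the smallest $S$ that satisfies the bound succeeds with high probability. The diameter property holds for every sample, and if the padded decomposition itself only guarantees the diameter bound w.h.p., we take a union bound.

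The main obstacle is not the reduction above, which is routine. It lies in making sure the cited padded decomposition has the required form: padding parameter $O(\log h)$ valid for small padding radii $\gamma$, diameter bounds that are weak rather than strong, unweighted graphs, and polynomial-time sampling. If~\cite{CF25} states the padding guarantee only for $\gamma$ in some restricted range, I would first check that radius-$1$ padding with $\Delta\ge\beta$ falls inside that range. Failing that, I would rescale by subdividing edges or adjust $\Delta$ by a constant factor, which only affects constants.
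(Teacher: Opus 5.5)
Your proposal is correct and follows the paper's proof essentially verbatim. Like the paper, you sample a $\Delta$-bounded partition with $\gamma = 1/\Delta$, delete every vertex whose radius-$1$ ball is not contained in its own cluster, bound $\mathbb{E}|S| \le n(1-e^{-\beta/\Delta}) \le \beta n/\Delta$, and observe that every edge of $G\setminus S$ stays inside a single cluster. Your two extra steps fill in details the paper leaves implicit, since it only bounds $\mathbb{E}|S|$: taking $S=V$ when $\Delta$ is too small for the admissible padding range, and using Markov plus $O(\log n)$ repetitions to turn the expectation bound into a high-probability size bound.
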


We note that the LDD in \Cref{cor:separation} is a simple corollary of padded decomposition~\cite{CF25}; see \Cref{sec:prelim}. Our algorithm has four steps:
\begin{itemize}
    \item (Step 1) we apply LDD to $H$ with $\Delta = \ell \log h$ to find a set $S$ of size $O(n/\ell)$. If every connected component of $H\setminus S$ has size less than $2n/3$ vertices, then we have found a balanced separator $X\cup S\cup F(\cK,H)$. Otherwise, we have found a vertex $v$ such that at least $2n/3$ other vertices are within distance $\Delta$ from $v$; note that we do not add any new vertex to the current separator $X$ in this case. We use $v$ to divide $V(H)$ into 3 parts, based on the distance from $v$: $U = \{u: \delta_H(u,v)\leq \Delta\}$ (upper), $M = \{u: \Delta < \delta_H(u,v)\leq \Delta + (\log h+1)\ell\}$ (middle), and $B = \{u: \delta_H(u,v)> \Delta + (\log h+1)\ell\}$ (bottom); see \Cref{fig:UMB-vis}. For a technical reason, we need to define an extension of $M$, denoted by $M^{+\ell}$, which contains $M$ and all vertices of $B$ within distance $\ell$ from $M$.  (These sets are continuous layers in the BFS tree rooted at $v$.)  The remaining three steps are based on the structure of the three parts.
    \item (Step 2) If every branch set $C\in \mathcal{K}$ has a neighbor in $U\cup M^{+\ell}$, then we can add a new branch set of size $O(h \ell \log(h))$ to $\cK$, consisting of at most $h$ shortest paths, each of length at most $\Delta  + (\log h+1)\ell + \ell = O(\ell \log h)$. 
    \item  (Step 3) If $|B| \leq n/h$, which means at least one layer among $\ell$ layers in  $M^{+\ell}\cap B$ has size at most $n/(\ell h)$.  Then we can extend an existing branch set $C\in \cK$ such that in the next iteration $|N_H(C)| \leq n/(h\ell)$. Compared to the neighborhood bound in Case II of the AST algorithm, we reduce the bound by a factor of $h$.  
    
    \item (Step 4) If $|B|>n/h$, then can find a set $S$ which is a layer $i^*$ in $M$ such that $|S| \leq \frac{|L_{\geq i^*+1}|}{\ell}$ where  $L_{\geq i^*+1}$ are all layers after $i^*$.  Then we add $S$ to $X$, and charge its size to the vertices in $L_{\geq i^*+1}$. By the LDD in Step 1, we know that at least $2n/3$ vertices are in $U$. Thus, in the next iteration, the largest connected component will not contain any vertex from  $L_{\geq i^*+1}$, and hence every vertex will be charged once during the entire algorithm. This means $|X| \leq n/\ell$ in the end.
\end{itemize}

If we never return the separator in Step 1, the final separator is $X\cup F(H,\mathcal{K})$; otherwise, it is $X\cup S\cup F(\cK,H)$ where $S$ is the vertex set from the LDD~\Cref{cor:separation}. Here $F(\mathcal{K}, H) = \cup_{C\in \cK} \arg\min \{|V(C)|, |N_H(C)|\}$, which is the same as in the AST algorithm. To bound the size, observe that in Step 1, $|S|  = O(n/\ell)$ by \Cref{cor:separation}. In Step 2, $|V(C)| \leq O(h\ell \log h)$ and in Step 3, $N_H(C) \leq n/(h\ell)$. As $\cK$ contains at most $h$ branch sets, the total size of $F(\cK,H)$ is $h\cdot O(h\ell \log h + n/(h\ell)) = O(n/\ell + h^2\ell  \log h)$. In Step 4, $|X| \leq n/\ell$, giving a separator of total size:
\begin{equation}\label{eq:our-bound}
    O(n/\ell + h^2 \ell \log h)
\end{equation}
 By choosing $\ell = \sqrt{n}/(h\sqrt{\log h})$, we get a separator of size $O(h\sqrt{\log h}\sqrt{n})$ as in \Cref{thm:Main}. As mentioned above, the root of our improvement is the smaller neighborhood bound resulting from the LLD step.

\section{Preliminaries}\label{sec:prelim}

Let $G = (V,E)$ be an undirected and unweighted graph. If the vertex set and edge set of $G$ are not explicitly written, then we denote the vertex set by $V(G)$ and the edge set by $E(G)$. If edges of $G$ are associated with a weight function $w$, then we write $G = (V,E,w)$. Graphs in this paper are generally unweighted, unless noted otherwise.  We denote by $\delta_G(u,v)$ the shortest distance in $G$ between any two vertices $u,v\in V$. Let $B_{G}(v,r) = \{u: \delta_G(v,u)\leq r\}$ be the ball of radius $r$ centered at  a vertex $v$.

Let $\Delta > 0$ be a parameter and an edge-weighted graph $G = (V,E,w)$. A \EMPH{$\Delta$-bounded} partition of $G$ is a partition $\mathcal{P}$ of $V$ such that the weak diameter of every set $P\in \mathcal{P}$ is at most $\Delta$. For a vertex $x$, we denote  by $\mathcal{P}(x)$ the set in $\mathcal{P}$ containing $x$.  Let $\beta \geq 1$ and $\delta \in (0,1)$ be two parameters. A \EMPH{$(\beta,\delta,\Delta)$-padded decomposition} of $G$  is a distribution of  $\Delta$-bounded partitions of $V$, denoted by $\mathcal{D}$, such that for any parameter $\gamma \in (0,\delta)$, we have:
\begin{equation}\label{eq:padd-prob}
    \pr_{\mathcal{P}\sim \mathcal{D}}[B_G(x,\gamma \Delta)\subseteq \mathcal{P}(x)]\geq  e^{-\beta \gamma}
\end{equation}

The parameter $\beta$ is called the \EMPH{padding parameter}. An ideal padded decomposition should have $\beta$ as small as possible, and $\delta$ close to $1$. We say that $G$ admits a \EMPH{$(\beta,\delta)$-padded decomposition scheme} if it admits a  $(\beta,\delta,\Delta)$-padded decomposition for any $\Delta > 0$. The scheme is \EMPH{samplable in polynomial time} if for every $\delta > 0$, one can sample a  $\Delta$-bounded partition from the $(\beta,\delta,\Delta)$-padded decomposition in polynomial time. Constructing a padded decomposition scheme for $K_h$-minor-free graphs with $\beta = O(\log h)$ and $\delta = \Omega(1)$ had been a long-standing open problem until the recent work by Conroy and Filtser~\cite{CF25} who constructed such a scheme.

\begin{theorem}[Padded Decomposition \cite{CF25}]\label{thm:padded-decomp} Every edge-weighted $K_h$-minor-free graph admits a $(\beta,\delta)$-padded decomposition scheme with $\beta = O(\log h)$ and $\delta = \Omega(1)$. Furthermore, the scheme is samplable in polynomial time. 
\end{theorem}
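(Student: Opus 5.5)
This theorem is imported from Conroy and Filtser~\cite{CF25}, and the rest of the paper uses it only as a black box (for example, to derive \Cref{cor:separation}). The plan below is therefore an outline of how one would reconstruct their argument, not a self-contained proof. Fix $\Delta>0$. The approach has two parts:
\begin{itemize}
\item a deterministic structural decomposition of $G$ in which every vertex ``sees'' only $\mathrm{poly}(h)$ candidate cluster centers at scale $\Delta$;
\item a random exponential-clock race among those centers.
\end{itemize}
The generic race argument yields padding parameter $O(\log k)$, where $k$ is the number of competitors per vertex. The whole task is to make $k=\mathrm{poly}(h)$ rather than $\mathrm{poly}(n)$, so that $\beta=O(\log h)$.

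For the structural part, I would build a buffered cop decomposition of $G$ at scale $\Theta(\Delta)$. We iteratively carve out supernodes. Each supernode $\eta$ carries a skeleton $T_\eta$, which is a tree made of $O(h)$ shortest paths, each shortest in the graph remaining when $\eta$ is created. Every vertex of $\eta$ is within distance $O(\Delta)$ of $T_\eta$. Two properties must hold:
\begin{itemize}
\item a buffer property: a supernode created later stays at distance $\Omega(\Delta)$ from the skeletons it does not touch;
\item a domain-size bound: by $K_h$-minor-freeness, every vertex lies within distance $O(\Delta)$ of skeletons of only $O(h)$ supernodes.
\end{itemize}
The latter is proved by contracting supernodes: more than $h$ mutually adjacent domains would give a $K_h$ minor. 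Next, on each skeleton path I would place a $\Theta(\Delta)$-net of centers. A ball of radius $O(\Delta)$ meets a shortest path in a subpath of length $O(\Delta)$, so it contains $O(1)$ net points per path. Hence every vertex has $k=O(h)\cdot O(h)\cdot O(1)=\mathrm{poly}(h)$ relevant centers.

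For the random part, give each center $c$ an independent shift $r_c$ drawn from a truncated exponential distribution with mean $\Theta(\Delta/\log k)$, truncated at $\delta\Delta$. Assign each vertex $x$ to the center maximizing $r_c-\delta_G(x,c)$, restricting to centers whose supernode domain contains $x$. Since every assigned vertex is within $O(\Delta)$ of its center, rescaling constants gives weak diameter at most $\Delta$. The standard memorylessness argument bounds the probability that $B_G(x,\gamma\Delta)$ is cut by the event that the top two shifted values among $x$'s $k$ competitors are within $2\gamma\Delta$. This yields $\pr[B_G(x,\gamma\Delta)\subseteq\mathcal{P}(x)]\ge e^{-O(\gamma\log k)}=e^{-O(\gamma\log h)}$ for $\gamma<\delta$, with $\delta=\Omega(1)$ coming from the truncation. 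Every step (the greedy supernode construction, shortest paths, net selection and sampling) is polynomial time.

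The main obstacle is the structural step: we need the candidate set to be $\mathrm{poly}(h)$ and simultaneously need the race restricted to domains not to break padding. The concern is that a ball around $x$ near a domain boundary could be split between supernodes without any competing center ``noticing''. I would handle this with the buffer property. Because of it, whenever $B_G(x,\gamma\Delta)$ reaches into a supernode, $x$ itself already lies in that supernode's domain, so all centers that could claim points of the ball are among $x$'s $k$ competitors. The buffer has to hold with respect to distances in the remaining graph while skeletons are only shortest paths there, and showing this carefully is where I expect most of the work.
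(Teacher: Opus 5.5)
The paper does not prove this statement. It is imported from \cite{CF25} and used only as a black box (for instance, to derive the LDD lemma), so the citation is the entire justification and there is no internal argument to compare your route with. Read as a reconstruction, however, your outline has a genuine gap, and it sits at the step you treat as routine: the domain-size bound. The contraction argument only controls the \emph{combinatorial} domain of a residual component. That domain consists of the earlier supernodes that are pairwise adjacent and adjacent to the component, so it has at most $h-2$ members. It says nothing about how many skeletons are geometrically close to a vertex. Supernodes whose $O(\Delta)$-neighborhoods merely share a vertex $x$ need not be pairwise adjacent, and contracting them gives a star, not a clique.

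Concretely, let $G$ be the weighted star with center $x$, a leaf $c$ at distance $\Delta$, and leaves $s_1,\dots,s_m$ at distance $\epsilon<\gamma\Delta$; it is a tree, hence $K_3$-minor-free. Carving from $c$ with radius $\Delta$ gives the supernode $\{c,x\}$. Each $s_i$ is then its own residual component and becomes a supernode with skeleton $\{s_i\}$. Every property you list holds, including the buffer for any width up to $\Delta$ when it is measured in residual graphs, as you intend. Every combinatorial domain has size $1$, yet $x$ is within $\epsilon$ of $m$ skeletons. In your race the centers are $c,s_1,\dots,s_m$. For $\delta<1/2$ the center $c$ never wins inside $B_G(x,\gamma\Delta)\supseteq\{x,s_1,\dots,s_m\}$, and $s_j$ leaves its own cluster only if $r_{s_i}\ge r_{s_j}+2\epsilon$ for some $i$. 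So the ball is padded only if the largest of $m$ i.i.d.\ shifts supported on $[0,\delta\Delta]$ exceeds all the others by $2\epsilon$. For fixed $\gamma$ and $\epsilon$ this probability tends to $0$ as $m\to\infty$, so your scheme has no padding parameter depending only on $h$.

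The same example shows that the buffer does not resolve the boundary issue you flag. The buffer holds, yet $B_G(x,\gamma\Delta)$ enters $m$ supernodes created after the one containing $x$, and none of them lies in its combinatorial domain. Your claim that ``$x$ already lies in that supernode's domain'' is therefore true only in the geometric sense, where the count is unbounded. Separately, the bound of $O(1)$ net points per skeleton path assumes the path is shortest in $G$. Your skeleton paths are shortest only in the residual graph, and a $G$-ball can meet such a path in many far-apart places through previously carved supernodes. What is missing is a way to organize the random choices so that, for every $x$, only $\mathrm{poly}(h)$ centers can affect how $B_G(x,\gamma\Delta)$ is split, even though the ball may meet unboundedly many supernodes. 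You would also need the matching padding analysis across supernode boundaries for all $\gamma$ up to a constant; note that your $\delta=\Omega(1)$ also rests on the $\Omega(\Delta)$ buffer, which you assert without proof. Supplying this is the real work in \cite{CF25}. Your sketch does not provide it, so within this paper the citation remains the proof.
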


Now we show that the padded decomposition in \Cref{thm:padded-decomp} implies the LDD in \Cref{cor:separation}, restated below.

\LDDLem*
\begin{proof}
Let $\Delta$ be the parameter in the LDD. Let $\mathcal{P}$ be a $\Delta$-bounded partition sampled from a $(\beta = O(\log h), \gamma = 1/\Delta,\Delta)$-padded decomposition for $G$ (with unit edge weight). Let $S = \{s: B_G(s,1) \not\subseteq \mathcal{P}(s)\}$. Observe that:

\begin{equation}
    \begin{split}
       \mathbb{E}[|S|] &= \sum_{v\in V} \pr[B_G(v,1) \not\subseteq \mathcal{P}(v)] \\
       &\leq \sum_{v\in V}  1-e^{-\beta \gamma } \qquad \text{(by \Cref{eq:padd-prob})}\\
       &= \sum_{v\in V}  1-e^{-\beta/\Delta}  \qquad \text{(since $\gamma =1/\Delta
       $)}\\
       &\leq \sum_{v\in V} \beta/\Delta = O(n\log h/\Delta) \qquad \text{(since $\beta =O(\log h)
       $)}
    \end{split}
\end{equation}

To show that every connected component, say $C$, of $G\setminus S$ has weak diameter at most $\Delta$, we observe that $C \subseteq P$ for some set $P \in \mathcal{P}$, since otherwise, at least one vertex $v\in C$ has $B_G(v,1)\not\subseteq \mathcal{P}(v)$, and hence is in $S$, a contradiction.
\end{proof}

\section{Proof of \Cref{thm:Main}}

Our algorithm will take a parameter $\ell > 0 $ as an input and output a balanced separator of size depending on $\ell$, as stated in the following lemma.

\begin{restatable}{lemma}{mainLemma}\label{lm:Main} Let $G$ be an $n$-vertex graph $K_h$-minor-free graph, and $\ell > 0$ be any parameter. Then one can find a balanced separator of size  $O(\frac{n}{\ell} + \ell h^2\log h)$ in randomized polynomial time. 
\end{restatable}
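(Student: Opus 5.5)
We run the iterative template of the overview with invariants. We maintain a minor model $\cK$ with $|\cK|<h$, a set $X$, and $H$, the largest component of $G\setminus(X\cup F(\cK,H))$, where $F(\cK,H)=\bigcup_{C\in\cK}\arg\min\{|V(C)|,|N_H(C)|\}$. We also require that every $C\in\cK$ has a neighbor in $H$ and is disjoint from $H$. Two size invariants are kept: every branch set has either $|V(C)|=O(h\ell\log h)$ or $|N_H(C)|\le n/(h\ell)$, and $|X|\le \sum$ of the charges, where each vertex is charged at most $1/\ell$, once.

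Each iteration strictly shrinks $H$ or enlarges $\cK$, so the number of iterations is polynomial. Since $|\cK|\le h-1$, the final bound is $|X|+|S|+|F|=O(n/\ell)+O(n\log h/\Delta)+h\cdot O(h\ell\log h+n/(h\ell))$, which is $O(n/\ell+h^2\ell\log h)$ with $\Delta=\ell\log h$. One point needs care: the neighborhood bound on $N_H(C)$ must persist as $H$ shrinks. This holds because $N_{H'}(C)\subseteq N_H(C)$ when $H'\subseteq H$, and $V(C)$ never changes except when $C$ is extended in Step 3, at which point we reset to the neighborhood option.

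Step 1 applies \Cref{cor:separation} to $H$ (which is $K_h$-minor-free) with $\Delta=\ell\log h$. If all components of $H\setminus S$ have at most $2n/3$ vertices, we output $X\cup S\cup F$. Otherwise some component $Y$ has more than $2n/3$ vertices and weak diameter at most $\Delta$. Pick $v\in Y$ and take BFS layers $L_i$ in $H$. We must check that $Y\subseteq U$: the weak diameter is measured in $H$ since we apply the LDD to $H$ as the host graph, so this holds. Define $U,M,B,M^{+\ell}$ as in \Cref{fig:UMB-vis}.

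Step 2: if every $C\in\cK$ has a neighbor in $U\cup M^{+\ell}$, we take shortest paths in $H$ from $v$ to one such neighbor for each $C$. Their union is a connected subgraph with $O(h\ell\log h)$ vertices, adjacent to all branch sets, so it becomes a new branch set. $H$ is then recomputed, and each old branch set keeps a neighbor in the new largest component (otherwise we may drop or freeze it, after checking that this does not break the bound). I expect this bookkeeping, namely ensuring each branch set still touches the next $H$, to be the delicate part; I would handle it as AST does, by only counting branch sets touching $H$ and placing the others' $F$-contribution permanently into $X$ at their current (already bounded) cost.

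Otherwise some $C$ has all its $H$-neighbors in $B\setminus M^{+\ell}$, i.e. at distance more than $\ell$ below $M$. Step 3 handles $|B|\le n/h$: some layer $L_j$ among the top $\ell$ layers of $B$ has $|L_j|\le n/(h\ell)$. We extend $C$ by the component of $H\setminus L_j$ beyond $L_j$ that contains its neighbors. This is connected via $C$ itself: we absorb the union of the deeper components adjacent to $C$, each of which touches $C$. Now $N_H(C)\subseteq L_j$, giving the Step 3 bound. Since $U$ has more than $2n/3$ vertices and lies above $L_j$, the largest component is unaffected except for the absorbed vertices.

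Step 4 handles $|B|>n/h$. Among the $(\log h+1)\ell$ layers of $M$, let $s_i=|L_{\ge i}|$. If $|L_i|>s_{i+1}/\ell$ for all $i$ in $M$, then $s$ grows by a factor $(1+1/\ell)$ per layer going upward. Over $(\log h+1)\ell$ layers this yields $s>e^{\log h+1}n/h>n$, a contradiction. Hence some $i^*$ has $|L_{i^*}|\le |L_{\ge i^*+1}|/\ell$. We add $L_{i^*}$ to $X$ and charge it to $L_{\ge i^*+1}$, which is separated from $U$ and so is never in the largest component again; each vertex is therefore charged only once. The main obstacle I anticipate is the branch-set maintenance in Step 2 and Step 3 together with the interaction with the $\arg\min$ in $F$, rather than the counting.
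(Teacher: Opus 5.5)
Your four steps, the size invariants, the Step 3 construction and the charging for $X$ follow the paper. The gap is the part you leave open: what happens to branch sets that lose contact with $H$, and why the output is balanced. Your proposed fix fails there.

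\textbf{Dropped branch sets.} Putting the $F$-contribution of each dropped branch set permanently into $X$ breaks $|X|=O(n/\ell)$. Only the number of branch sets alive at one time is at most $h-1$. A new one is created in every Step 2 iteration, nothing in your argument bounds how many are dropped over the whole run, and these contributions are not charged at rate $1/\ell$ to removed vertices.

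In the paper, dropping (trimming) is free because of an invariant you never formulate (Property 5 of \Cref{lm:invariants}): every vertex outside $H$ that is adjacent to $H$ lies in $X\cup V(\cK)$. It is preserved because everything leaving $H$ goes into $X$ (Step 4), into a branch set (Steps 2 and 3), or into a component of $H\setminus V(\tilde C)$ not adjacent to the new $H$. A trimmed branch set has no neighbour in $H$, so removing it does not disturb the invariant.

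Given this invariant, suppose $F$ selects $N_H(C)$ for some $C$. Then every edge between $H$ and $C$ has an endpoint in $F$, so $H\setminus F(\cK,H)$ is a union of components of $G\setminus(X\cup F(\cK,H))$. Everything outside $H$ (trimmed branch sets and discarded pieces, however they are glued together) has at most $n/3$ vertices as long as $|V(H)|\ge 2n/3$. Your stated invariant, that $H$ is the largest component of $G\setminus(X\cup F(\cK,H))$, is not even literally meaningful, since $H$ meets $F$. This boundary property is what should replace it.

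\textbf{Termination and balance.} You also need an explicit termination argument.
\begin{itemize}
\item Steps 3 and 4 are never the last iteration: the new $H$ still contains $U$, which has more than $2n/3$ vertices.
\item If Step 1 returns, the components of $H\setminus S$ have at most $2n/3$ vertices and, by the invariant, cannot leak out of $H$.
\item If the loop exits right after Step 2, output $X\cup F(\cK,H)\cup V(\tilde C)$, with $\cK,H$ as at the start of that iteration. The extra cost is $O(h\ell\log h)$. Every component of $H\setminus V(\tilde C)$ is no larger than the new $H$, hence has fewer than $2n/3$ vertices.
\end{itemize}
Simply recomputing $F$ for the new $H$ is not enough. $\tilde C$, or an old branch set whose choice flips to $N_H(C)$, may then be absent from the separator and glue the discarded components together.

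\textbf{Small slip in Step 4.} Since $(1+1/\ell)^\ell<e$, the factor $e^{\log h+1}$ is not justified. Use $(1+1/\ell)^\ell\ge 2$ for $\ell\ge 1$ with base-2 logarithms to get $2^{\log h+1}=2h$, as in \Cref{lm:importantlayer}.
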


Note that \Cref{thm:Main} follows by choosing  $\ell = \sqrt{n}/(h\sqrt{\log h})$. 

\subsection{The Algorithm}\label{subsec:the-alg}

Our algorithm follows the iterative framework described in \Cref{subsec:overview}; see \Cref{alg:separator} for the pseudocode. In each iteration, it maintains: (i) a clique minor model $\mathcal{K}$, (ii) a subset of vertices $X$, and (iii) a subgraph $H$ of $G$, which is the largest connected component of $G$ obtained by removing from $G$ the current balanced separator. The current balanced separator contains $X$ and a subset of vertices defined below, denoted by $F(\mathcal{K},H)$, which depends on  $\mathcal{K}$ and $H$. In every step, the algorithm will shrink $H$, update $\mathcal{K}$, and/or expand $X$. (The size of $\mathcal{K}$ could be increased or decreased in every step.)

\begin{figure}[!htb]
    \centering
    \resizebox{1.0\textwidth}{!}{%  % Set to 80% of text width, '!' maintains aspect ratio
        \begin{tikzpicture}[
  every node/.style={font=\small},
  vertex/.style={circle,draw,fill=white,minimum size=4.2mm,inner sep=0pt},
  root/.style={circle,draw,fill=white,minimum size=5.0mm,inner sep=0pt,font=\small},
  treeedge/.style={draw=gray!65,line width=.45pt},
  omittededge/.style={treeedge,dashed},
  layerline/.style={draw=gray!35,densely dotted},
  setbrace/.style={decorate,decoration={brace,amplitude=6pt},thick},
  smallbrace/.style={decorate,decoration={brace,amplitude=4pt,mirror},thick},
  plusbrace/.style={decorate,decoration={brace,amplitude=7pt,mirror},thick,dashed},
  layerlabel/.style={font=\scriptsize,anchor=north,align=center}
]

% Shaded blocks for the three parts of the partition.
\fill[blue!8,rounded corners=3pt]    (-0.55,-3.65) rectangle (5.35,3.65);
\fill[orange!12,rounded corners=3pt] (5.40,-3.65) rectangle (10.15,3.65);
\fill[green!9,rounded corners=3pt]   (10.20,-3.65) rectangle (18.15,3.65);

% The top ell layers of B, included in M^{+\ell}.
\fill[green!20,rounded corners=2pt]  (10.20,-3.65) rectangle (14.95,3.65);

% Vertical layer guides.
\foreach \x in {0,1.6,3.2,4.8,6.4,8.0,9.6,11.2,12.8,14.4,16.0,17.6}{
  \draw[layerline] (\x,-3.45) -- (\x,3.45);
}

% Layer names.
\node[layerlabel] at (0,-3.95)    {$L_0=\{v\}$};
\node[layerlabel] at (1.6,-3.95)  {$L_1$};
\node[layerlabel] at (3.2,-3.95)  {$\cdots$};
\node[layerlabel] at (4.8,-3.95)  {$L_\Delta$};

\node[layerlabel] at (6.4,-3.95)  {$L_{\Delta+1}$};
\node[layerlabel] at (8.0,-3.95)  {$\cdots$};
\node[layerlabel] at (9.6,-3.95)  {$L_{\Delta+\ell^*}$};

\node[layerlabel] at (11.2,-3.95) {$L_{\Delta+\ell^*+1}$};
\node[layerlabel] at (12.8,-3.95) {$\cdots$};
\node[layerlabel] at (14.4,-3.95) {$L_{\Delta+\ell^*+\ell}$};
\node[layerlabel] at (16.0,-3.95) {$L_{\Delta+\ell^*+\ell+1}$};
\node[layerlabel] at (17.6,-3.95) {$\cdots$};

% Representative BFS tree vertices.
\node[root] (v) at (0,0) {$v$};

\node[vertex] (a1) at (1.6, 1.15) {};
\node[vertex] (a2) at (1.6,-1.15) {};

\node[vertex] (d1) at (4.8, 2.75) {};
\node[vertex] (d2) at (4.8, 0.90) {};
\node[vertex] (d3) at (4.8,-0.90) {};
\node[vertex] (d4) at (4.8,-2.75) {};

\node[vertex] (m1) at (6.4, 3.00) {};
\node[vertex] (m2) at (6.4, 1.80) {};
\node[vertex] (m3) at (6.4, 0.60) {};
\node[vertex] (m4) at (6.4,-0.60) {};
\node[vertex] (m5) at (6.4,-1.80) {};
\node[vertex] (m6) at (6.4,-3.00) {};

\node[vertex] (n1) at (9.6, 2.60) {};
\node[vertex] (n2) at (9.6, 1.20) {};
\node[vertex] (n3) at (9.6, 0.00) {};
\node[vertex] (n4) at (9.6,-1.20) {};
\node[vertex] (n5) at (9.6,-2.60) {};

\node[vertex] (p1) at (11.2, 3.00) {};
\node[vertex] (p2) at (11.2, 1.80) {};
\node[vertex] (p3) at (11.2, 0.60) {};
\node[vertex] (p4) at (11.2,-0.60) {};
\node[vertex] (p5) at (11.2,-1.80) {};
\node[vertex] (p6) at (11.2,-3.00) {};

\node[vertex] (q1) at (14.4, 2.60) {};
\node[vertex] (q2) at (14.4, 1.20) {};
\node[vertex] (q3) at (14.4, 0.00) {};
\node[vertex] (q4) at (14.4,-1.20) {};
\node[vertex] (q5) at (14.4,-2.60) {};

\node[vertex] (r1) at (16.0, 2.10) {};
\node[vertex] (r2) at (16.0, 0.70) {};
\node[vertex] (r3) at (16.0,-0.70) {};
\node[vertex] (r4) at (16.0,-2.10) {};

% Edges of the BFS tree T. Dashed edges cross omitted layers.
\draw[treeedge] (v) -- (a1) (v) -- (a2);

\draw[omittededge]
  (a1) -- (d1) (a1) -- (d2)
  (a2) -- (d3) (a2) -- (d4);

\draw[treeedge]
  (d1) -- (m1) (d1) -- (m2)
  (d2) -- (m3)
  (d3) -- (m4)
  (d4) -- (m5) (d4) -- (m6);

\draw[omittededge]
  (m1) -- (n1) (m2) -- (n2)
  (m4) -- (n3) (m5) -- (n4)
  (m6) -- (n5);

\draw[treeedge]
  (n1) -- (p1) (n1) -- (p2)
  (n2) -- (p3)
  (n3) -- (p4)
  (n4) -- (p5)
  (n5) -- (p6);

\draw[omittededge]
  (p2) -- (q1) (p3) -- (q2)
  (p4) -- (q3) (p5) -- (q4)
  (p6) -- (q5);

\draw[treeedge]
  (q1) -- (r1)
  (q3) -- (r2)
  (q4) -- (r3)
  (q5) -- (r4);

\draw[omittededge]
  (r1) -- ++(.90,.35)
  (r2) -- ++(.90,.05)
  (r3) -- ++(.90,-.05)
  (r4) -- ++(.90,-.35);

% Ellipses inside the tree where layers are suppressed.
\node at (3.2, 1.90) {$\cdots$};
\node at (3.2,-1.90) {$\cdots$};
\node at (8.0, 2.10) {$\cdots$};
\node at (8.0,-2.10) {$\cdots$};
\node at (12.8, 2.10) {$\cdots$};
\node at (12.8,-2.10) {$\cdots$};
\node at (17.35,0) {$\cdots$};

% Braces for U, M, and B.
\draw[setbrace] (-0.55,4.10) -- (5.35,4.10)
  node[midway,above=7pt,align=center]
  {$U$\\[-1pt] upper};

\draw[setbrace] (5.40,4.10) -- (10.15,4.10)
  node[midway,above=7pt,align=center]
  {$M$\\[-1pt]
   middle};

\draw[setbrace] (10.20,4.10) -- (18.15,4.10)
  node[midway,above=7pt,align=center]
  {$B$\\[-1pt] bottom};

% Brace for the top ell layers of B.
\draw[smallbrace] (10.20,-4.65) -- (14.95,-4.65)
  node[midway,below=7pt,align=center]
  {top $\ell$ layers of $B$};

% Brace for M^{+\ell}.
\draw[plusbrace] (5.40,-5.65) -- (14.95,-5.65)
  node[midway,below=8pt,align=center]
  {$M^{+\ell}
    =M\cup L_{\Delta+\ell^*+1:\Delta+\ell^*+\ell}$};

%\node[anchor=west,align=left] at (-0.55,5.55)
% {BFS tree $T$ rooted at $v$; each vertical column is one BFS layer.};

\end{tikzpicture}
    }
    \caption{Each $L_i$ is the $i$-th layer of the BFS tree rooted at $v$, starting from $L_0 = \{v\}$. Here $\Delta = \ell \log$ and $\ell^* = (\log h + 1)\ell$.}
    \label{fig:UMB-vis2}
\end{figure}

\hdashrule{0.95\linewidth}{1pt}{2pt}

\paragraph{Step 1: Low-diameter Decomposition.} Let $\Delta = \ell \log(h)$. We apply the LDD in \Cref{cor:separation} to $H$ with parameter $\Delta$ to find a subset $S\subseteq V(H)$ of size $|S| = O(\log(h) n/\Delta) = O(n/\ell)$ such that every connected component of $H\setminus S$  has a weak diameter at most $\Delta$. If every connected component has size at most $2n/3$, then we output $X\cup S\cup F(\mathcal{K})$ as a balanced separator. (Note here that $S$ has size $O(n/\ell)$, which is well within the bound in \Cref{lm:Main}; the size of $X$ and $F(\mathcal{K},H)$ will be inductively guaranteed to satisfy \Cref{lm:Main} as well.) 

Otherwise, pick a vertex $v$ such that $|B_H(v,\Delta)|\geq 2n/3$; such a vertex exists since at least one connected component of $H\setminus S$ has size at least $2n/3$. Let $T$ be the BFS rooted at $v$ of $H$. Let $L_i$ be the $i$-th layer of the BFS tree, starting from the $0$-th layer, which contains $v$ only. For any integers $0 \leq a\leq b$, let $\text{\EMPH{$L_{a:b}$}} = \{L_i: a\leq i \leq b\}$, $\text{\EMPH{$L_{\geq b}$}} = \cup_{i\geq b} L_i$ and $\text{\EMPH{$L_{\leq a}$}} = L_{0:a}$.  We partition the vertex set of $H$ into three sets:  $U =L_{\leq \Delta}$ (upper), $M = L_{\Delta + 1: \Delta + \ell^*}$ (middle), and $B = L_{\geq \Delta + \ell^* + 1}$ (bottom), where $\text{\EMPH{$\ell^*$}} =  (\log h+1)\ell$. Let $M^{+\ell} = M\cup L_{\Delta + \ell^* + 1:  \Delta + \ell^* + \ell}$. Note that $M^{+\ell}$ contains exactly $\ell$  top layers of $B$; see \Cref{fig:UMB-vis2}. We proceed to \emph{one of the following steps}. In every step $H$ will shrink.

\paragraph{Step 2: Growing the minor model.} This step applies when  $N_H(C)\cap (U\cup M^{+\ell})\not= \varnothing$ for every branch set $C\in \mathcal{K}$, where  $N_H(C)$ is the set of neighbors of $C$ in $H$. That is, every branch set $C\in \mathcal{K}$ has a neighbor in the first $O(\ell \log h)$ layers of $T$. In this case, we can add a new branch set $\Tilde{C}$ to $\mathcal{K}$, which consists of at most  $|\mathcal{K}|$  shortest paths, each from $v$ to a neighbor of a branch set in $\mathcal{K}$. We then remove $\Tilde{C}$ from $H$, and start a new iteration (without running the following subsequent steps).  Observe $\Tilde{C}$ has size $O(h\ell \log h)$, which is within the bound of \Cref{lm:Main}.

\paragraph{Step 3: Growing a branch set.~} If Step 2 does not apply, then there must exist a branch set $C$ such that $N_H(C)\subseteq B\setminus M^{+\ell}$. If $|B|\leq n/h$, then we grow $C$ into $H$ such that in the next iteration, $N_H(C) \leq \frac{n}{h\ell}$. 

More precisely, let $Y$ be the layer of minimum size among all the layers in $M^{+\ell}\setminus M$, and $y$ be its layer index, i.e., $Y = L_{y}$. Since $M^{+\ell}\setminus M$ is a subset of $B$ and has exactly $\ell$ layers, $|Y| \leq |B|/\ell = n/(h\ell)$.  Let $W = H[L_{\geq (y+1)}]$. Note that $W$ could be disconnected.  Let $Z$ be all the vertices reachable from $N_H(C)$ in the graph $W$. Then we grow $C$ by setting $C\leftarrow C\cup Z$, and shrink $H$ by removing all vertices of $Z$ from $H$. After this step, $N_H(C)\subseteq L_{y}$ and hence  $N_H(C) \leq |Y| \leq \frac{n}{h\ell}$. After removing $Z$ from $H$, a branch set in $\mathcal{K}$ may no longer have a neighbor in $H$; in this case, we \EMPH{trim $\mathcal{K}$} by removing every such branch set from $\mathcal{K}$. Then we start a new iteration.

Now we define:
\begin{equation}\label{eq:FKH-de}
    F(\mathcal{K}, H) \coloneqq \cup_{C\in \mathcal{K}}(\arg \min\{|V(C)|, |N_H(C)|\})
\end{equation}
That is, for every branch set $C$ in $\mathcal{K}$, $ F(\mathcal{K}, H)$ either contains $V(C)$ or its set of neighbors in $H$, whichever set has smaller size. The idea here is that one can disconnect $C$ from $H$ by deleting either $C$ itself or its set of neighbors in $H$; we prefer deleting the smaller set.  As mentioned in Step 1, we would like to have $|F(\mathcal{K}, H)| = O(n/\ell + \ell h^2\log h)$ (the same bound in \Cref{lm:Main}).  When a branch set $C$ is added to $\mathcal{K}$ (in Step 2), it has size $O(h\ell \log h)$. However, $C$ might be grown further in Step 3. If it never grows, then all such branch sets will contribute at most $|\mathcal{K}|\cdot O(h\ell \log h) = O(\ell h^2\log h)$ to $F(\mathcal{K}, H)$. If it does grow in Step 3 (even many times), then $|N_H(C)| \leq n/(h\ell)$ and therefore, the neighbor sets of such branch sets contribute $|\mathcal{K}| n/(h\ell) = O(n/\ell)$ to  $|F(\mathcal{K}, H)|$. Thus, in any case, $|F(\mathcal{K}, H)| = O(n/\ell + \ell h^2\log h)$. 

\paragraph{Step 4: Growing $X$.~} If Step 3 does not apply, then at this step, there must exist a branch set $C$ such that $N_H(C)\subseteq B\setminus M^{+\ell}$ and $|B|> n/h$. By a simple counting argument, we can show that there exists a layer $i^*\in [\Delta+1, \Delta + \ell^{*}]$ such that $L_{i^*} \leq L_{\geq i^*+1}/\ell$ (see \Cref{lm:importantlayer}). We then add $L_{i^*}$ to $X$, remove $L_{\geq i^*}$ from $H$, trim $\mathcal{K}$ (by removing branch sets that have no neighbor in $H$), and start a new iteration. 

Note that $X$ only grows in Step 4, and its size can be bounded by a charging argument. Since $L_{i^*} \leq L_{\geq i^*+1}/\ell$, we can charge the size of $L_{i^*}$ to vertices in $L_{\geq i^*+1}$; each vertex gets $1/\ell$ charge. Since vertices in $L_{\geq i^*}$ are removed from $H$, no vertex will be charged twice during the entire course of the algorithm, giving  $|X|  \leq n/\ell$.

\hdashrule{0.95\linewidth}{1pt}{2pt}
\\

Now we sketch the argument for bounding the size of the separator; the full analysis of the algorithm is given in the next section. As quickly argued in Step 3 and Step 4, it holds that $|F(\mathcal{K}, H)| = O(n/\ell + \ell  h^2\log h)$ and $|X| \leq n/\ell$. If the algorithm returns a separator in Step 1, then the total size is $|S| + |X| + |F(\mathcal{K}, H)| = O(n/\ell +  \ell  h^2\log h)$. Otherwise, the algorithm will return a separator $X\cup F(\mathcal{K}, H)$ at the iteration when the number of vertices in $H$ first falls below $2n/3$. The size is also $O(n/\ell +  \ell  h^2\log h)$ as claimed in \Cref{lm:Main}.

\begin{algorithm}[H]
\caption{$\textsc{BalancedSeparator}(G,\ell)$\Comment{\lightgray{Find a balanced separator of size $O(n/\ell + \ell h^2\log h)$}}}\label{alg:separator}
\begin{algorithmic}[1]
    \State $x\leftarrow$ arbitrary vertex of $G$
    \State  $\mathcal{K}\leftarrow \{x\}, \,\, X\leftarrow \varnothing, \,\, H \leftarrow$ largest connected component of $G\setminus x$
    \While{$|V(H)|\geq 2n/3$}
        \State $\Delta \leftarrow \ell \log h,\,\, S\leftarrow \textsc{LDD}(H,\Delta)$ \Comment{\lightgray{Low diameter decomposition, \Cref{cor:separation}}}\Comment{\blue{Step 1}}
        \If{every connected component of $H\setminus S$ has $\leq 2n/3$ vertices} % Step 1
            \State \Return $X\cup S\cup F(\mathcal{K},H)$  \Comment{\lightgray{$F(\mathcal{K},H)$ defined in \Cref{eq:FKH-de}}}
        \EndIf
        \State $v\leftarrow $ vertex such that $B_H(v,\Delta)\geq 2n/3$
        \State $\{L_0,L_1,\ldots\}\leftarrow$ layers of BFS tree $T$ in $H$ rooted at $v$
        \State $\ell^*\leftarrow (\log h+1)\ell, \,\, U\leftarrow L_{\leq \Delta},\,\, M \leftarrow L_{\Delta + 1: \Delta + \ell^*},\,\, B\leftarrow L_{\geq \Delta + \ell^* + 1}, \,\,M^{+\ell} \leftarrow M\cup L_{\Delta + \ell^* + 1:  \Delta + \ell^* + \ell}$. \Comment{\blue{End of Step 1}}
        \If{$N_H(C)\cap (U\cup M^{+\ell})\not= \varnothing$ for every $C\in \mathcal{K}$} \Comment{\blue{Step 2}}  % Step 2
           \State $x_C \leftarrow$ an arbitrary vertex in $N_H(C)\cap (U\cup M^{+\ell})$ for every $C\in \mathcal{K}$
           \State $\Tilde{C} \leftarrow \cup_{C\in \mathcal{K}} T[v,x_C]$ \Comment{\lightgray{$T[v,x_C]$ is the subpath of $T$ from $v$ to $x_C$}}
          \State $\mathcal{K}\leftarrow \mathcal{K}\cup \{\Tilde{C}\}$
           \State $H\leftarrow$ largest connected component of $H\setminus V(\Tilde{C})$. \Comment{\blue{End of Step 2}}
           
        \Else
           \State $C\leftarrow$ a branch set in $\mathcal{K}$ s.t. $N_H(C)\subseteq B\setminus M^{+\ell}$
            \If{$|B|\leq n/h$}  \Comment{\blue{Step 3}} % Step 3 
                \State $Y\leftarrow$ the layer of minimum size among layers in $M^{+\ell}\setminus M$; \,\, $y\leftarrow$ layer index of $Y$.
                \State $W\leftarrow  H[L_{\geq y+1}]$
                \State $Z\leftarrow$ vertices reachable from $N_H(C)$ in $W$
                \State $C\leftarrow C\cup Z$ 
                \State $H\leftarrow $ largest connected component of $H\setminus Z$. \Comment{\blue{End of Step 3}}
            \Else  \Comment{\blue{Step 4}}% Step 4 
                \State $i^*\leftarrow$ a layer in  $[\Delta+1, \Delta + \ell^{*}]$ s.t. $L_{i^*} \leq L_{\geq i^*+1}/\ell$ \Comment{\lightgray{$i^*$ exits by \Cref{lm:importantlayer}}}
                \State $X\leftarrow X\cup L_{i^*}$
                \State $H\leftarrow$  largest connected component of $H\setminus L_{\geq i^*}$  \Comment{\blue{End of Step 4}}
            \EndIf 
        \EndIf
        \State $\mathcal{K}\leftarrow \textsc{Trim}(\mathcal{K},H)$ \Comment{\lightgray{remove every branch set without a neighbor in $H$}}
    \EndWhile
    \State \Return $X\cup F(\mathcal{K}, H)$  \Comment{\lightgray{the separator when $|V(H)| < 2n/3$}}
\end{algorithmic}
\end{algorithm}

\begin{comment}
    
\begin{figure*}[ht!]
\centering\small
\begin{algorithm}
\textul{$\textsc{Separator}(G,\ell)$:} \+
\\ $\mathcal{K}\leftarrow \varnothing, S\leftarrow \varnothing, H \leftarrow G$
\\  \textbf{while} $|V(H)|\geq 2n/3$\+
\\ 
\\   repeat $n$ times \+
\\      $i\leftarrow$ a random point in $X$
\\      pick  random $t\in [0,1]$
\\      assign $a$ to $i$ if $y^*_{ai}\geq t$\-
\\ $D \leftarrow$ all points $j$ such that $B(j,\Delta/\beta)$ intersects more than 1 clusters $\{C_i\}$.
\\  $C'_i\leftarrow C_i\setminus D$
\\ $C''_{i} \leftarrow $ add to $C'_i$ all points $j$ such that $d_X(C'_i,j)\leq \Delta/(2\beta)$. 
\\ return $\{C''_i, \text{singletons}\}$
\end{algorithm}
\caption{The rounding scheme}
\label{alg:rounding}
\end{figure*}
\end{comment}

\subsection{The Analysis}

First, we show that the layer $i^*$ in line 24 exists. 

\begin{lemma}\label{lm:importantlayer}
In Step 4, there exists  $i^*\in [\Delta+1, \Delta + \ell^*]$ such that $L_{i^*} \leq L_{\geq i^*+1}/\ell$.    
\end{lemma}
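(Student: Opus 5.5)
The plan is a standard contradiction argument via geometric growth. Suppose that for every $i\in[\Delta+1,\Delta+\ell^*]$ we have $|L_i| > |L_{\geq i+1}|/\ell$. Write $s_i = |L_{\geq i}|$. Then $s_i = |L_i| + s_{i+1} > (1+1/\ell)\, s_{i+1}$ for every $i$ in this range. Iterating downward from $i=\Delta+\ell^*$ to $i=\Delta+1$ gives
\begin{equation*}
s_{\Delta+1} > \left(1+\frac{1}{\ell}\right)^{\ell^*} s_{\Delta+\ell^*+1} = \left(1+\frac{1}{\ell}\right)^{(\log h+1)\ell} |B|.
\end{equation*}

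Next I bound both sides. In Step 4 we have $|B|>n/h$. Using $(1+1/\ell)^{\ell}\geq 2$ for $\ell\geq 1$, the right-hand side exceeds $2^{\log h+1}\cdot n/h = 2n$, taking $\log$ to base $2$. For the left-hand side, $s_{\Delta+1} = |M\cup B| \leq |V(H)| - |U|$. Step 1 guarantees $|U| = |B_H(v,\Delta)|\geq 2n/3$, so $s_{\Delta+1}\leq n/3 < 2n$. This is a contradiction, so some $i^*$ in the range satisfies $|L_{i^*}|\leq |L_{\geq i^*+1}|/\ell$.

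The only delicate points are bookkeeping. First, the base of the logarithm and the inequality $(1+1/\ell)^\ell\geq 2$ both need $\ell\geq 1$. This costs nothing, since for $\ell<1$ the lemma's bound $n/\ell$ is trivial, and I would assume $\ell\geq 1$ and integrality of $\ell,\Delta,\ell^*$ (rounding up). Even with a natural log, the growth factor $e^{(\log h+1)}$ times $n/h$ still beats $n/3$. Second, the inequality must use $B\neq\varnothing$, which is implied by $|B|>n/h>0$.

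Notably, the existence of the branch set $C$ with $N_H(C)\subseteq B\setminus M^{+\ell}$ is not needed here; only $|B|>n/h$ and $|U|\geq 2n/3$ matter. I expect no real obstacle beyond making the constant in $\ell^*$ large enough, which the choice $(\log h+1)\ell$ is designed for.
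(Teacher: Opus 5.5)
Your proof is correct and is the paper's own argument: the same geometric-growth contradiction $|L_{\geq\Delta+1}|>(1+1/\ell)^{\ell^*}|B|$; the paper only uses $|L_{\geq \Delta+1}|\leq n$, and your $>2n$ bound already suffices for that without invoking $|U|\geq 2n/3$. One side remark is false, though you do not rely on it: with a natural logarithm the factor is $\bigl((1+1/\ell)^{\ell}\bigr)^{\ln h+1}<e^{\ln h+1}$, and for $\ell=1$ it is only $2^{\ln h+1}$, which gives $2h^{\ln 2-1}n<n/3$ once $h$ is large, so the base-$2$ convention (as in the paper) is actually needed.
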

\begin{proof}
Recall that in this case $|B|> n/h$.  Suppose otherwise, then for every $i\in [\Delta+1, \Delta + \ell^*], L_{i}\geq L_{\geq i+1}/\ell$. We have:
    \begin{equation*}
    \begin{split}
        L_{\geq \Delta+1}&=  L_{\Delta+1} +     L_{\geq \Delta+2} \geq (1+\frac{1}{\ell})    L_{\geq \Delta+2} \overset{\text{induction}}{\geq} (1+\frac{1}{\ell})^{\ell^*}  L_{\geq \Delta + \ell^*+1} =  (1+\frac{1}{\ell})^{\ell^*} |B|\\
        &> (1+\frac{1}{\ell})^{\ceil{\log(h)}\ell}\frac{n}{h}   \geq   2^{\ceil{\log(h)}} n/h\geq n
    \end{split}
    \end{equation*}
a contradiction.
\end{proof}

Next, we show several invariants maintained by the algorithm. Let $V(\cK) = \cup_{C\in \cK}V(C)$.

\begin{lemma}[Invariants]\label{lm:invariants} At the beginning of an iteration where $|V(H)|\geq 2n/3$, $\mathcal{K}, X$ and $H$ satisfy the following properties:
\begin{enumerate}
    \item  $\mathcal{K}$ is a clique minor of size $|\mathcal{K}|$, and hence $|\cK|\leq h-1$ in every iteration when $G$ is $K_h$-minor-free.
    \item   $H$ and branch sets in $\mathcal{K}$ are vertex-disjoint.
    \item  Every branch set $C\in \mathcal{K}$ has a neighbor in $H$. That is, $N_H(C) \not= \varnothing$.
    \item  For every branch set $C\in \mathcal{K}$, either $|V(C)| \leq (h-1)(2\log(h)+2) \ell$ or $|N_H(C)|\leq n/(h\ell)$.
    \item Let $\bdry H$ be the vertices in $H$ that are adjacent to vertices not in $H$. Then, $N_G(\bdry H)\setminus V(H) \subseteq X\cup V(\cK)$.
\end{enumerate}
\end{lemma}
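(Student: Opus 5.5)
We prove all five invariants simultaneously by induction on the number of iterations. At initialization, $\mathcal{K} = \{x\}$ is a single-vertex branch set, $X=\varnothing$, and $H$ is the largest component of $G\setminus x$. Invariant 1 is trivial, invariant 2 holds since $x\notin V(H)$, and invariant 3 holds because $G$ is connected (we may assume this, handling components separately), so $x$ has a neighbor in $H$. Invariant 4 holds since $|V(C)|=1$. For invariant 5, any vertex outside $H$ adjacent to $H$ must be $x$, because otherwise it would lie in the same component of $G\setminus x$ as $H$. For the inductive step, we assume the invariants at the start of an iteration that does not return in Step 1. We then check each of Steps 2, 3, and 4 together with the subsequent \textsc{Trim}, writing $H'$ for the new largest component.

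\textbf{Invariants 2, 3 and 5.} These are structural, and we handle them first.
\begin{itemize}
\item Invariant 3 is enforced directly by \textsc{Trim}.
\item For invariant 2, the new $H'$ is a subgraph of $H$. In Step 2 we remove $V(\tilde C)$, and in Step 3 we remove $Z$ before adding it to $C$. Hence $H'$ avoids every (possibly enlarged) branch set. We also need $\tilde C$ to be disjoint from the old branch sets, which holds because $T\subseteq H$.
\item For invariant 5, take $u\in\partial H'$ and a neighbor $w\notin V(H')$. Either $w\notin V(H)$, and then $w\in X\cup V(\mathcal{K})$ by induction, or $w\in V(H)\setminus V(H')$. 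In the latter case, since $H'$ is a component of $H$ minus the removed set $R$, $w$ must lie in $R$, with $R\in\{V(\tilde C), Z, L_{\geq i^*}\}$.
\end{itemize}
In Steps 2 and 3, $R$ is now contained in a branch set. In Step 4 we need $w\in L_{i^*}\subseteq X$: a vertex of $H'\subseteq L_{<i^*}$ can only be adjacent to layers $L_{\geq i^*}$ through $L_{i^*}$, by the BFS layering property. The one subtlety is that \textsc{Trim} deletes branch sets from $\mathcal{K}$. A trimmed branch set, however, has no neighbor in $H'$, so none of its vertices can be such a $w$, and invariant 5 survives trimming.

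\textbf{Invariant 1.} Trimming preserves the clique-minor property, so there are two cases to check.
\begin{itemize}
\item In Step 2, $\tilde C$ is connected, being a union of root paths in $T$. It is adjacent to every $C\in\mathcal{K}$ via $x_C\in N_H(C)$.
\item In Step 3, $C\cup Z$ must be connected. Each vertex of $Z$ is reachable in $W$ from some vertex of $N_H(C)$, and that vertex is adjacent to $C$. Moreover $N_H(C)\subseteq B\setminus M^{+\ell}\subseteq L_{\geq y+1}$, so $N_H(C)\subseteq Z$. Adjacencies to other branch sets only increase.
\end{itemize}

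\textbf{Invariant 4.} This is the quantitative core. In Step 2, each path $T[v,x_C]$ has at most $\Delta+\ell^*+\ell+1$ vertices, and there are $|\mathcal{K}|\le h-1$ of them. Since $\Delta+\ell^*+\ell = (2\log h+2)\ell$, we get $|V(\tilde C)|\le (h-1)(2\log h+2)\ell$, up to sharing the root $v$; the off-by-one is absorbed by this counting.

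In Step 3, we claim that after the removal $N_{H'}(C\cup Z)\subseteq L_y$. Every neighbor in $H$ of $Z$ or of $C$ that lies in $L_{\geq y+1}$ would be reachable in $W$, and hence already in $Z$. Neighbors of $Z$ in layers $\le y-1$ are impossible, because BFS edges only join consecutive layers. Thus $|N_{H'}(C)|\le|L_y|\le|B|/\ell\le n/(h\ell)$.

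For every other branch set $C'$, the set $N_{H'}(C')$ only shrinks since $H'\subseteq H$, and $V(C')$ is unchanged, so its disjunct persists. I expect the main care to lie in this Step 3 neighborhood argument, together with the BFS-layer adjacency facts used in invariant 5 and the precise path-length count in Step 2.
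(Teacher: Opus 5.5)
Your proposal is correct and takes the same route as the paper's proof. It inducts over iterations: \textsc{Trim} gives invariant 3, and every removed vertex is absorbed into a branch set or, via the BFS layering, into $X$, which gives invariants 2 and 5. The at most $|\mathcal{K}|$ root paths of $T$ reaching layer at most $(2\log h+2)\ell$ bound $|V(\tilde C)|$ in Step 2, and the single layer $L_y$ cuts $C\cup Z$ off from the rest of $H$ in Step 3.

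You are more careful than the paper in several spots: only $L_{i^*}$, not all of $L_{\geq i^*}$, enters $X$; you check connectivity of $C\cup Z$; and you note that neighborhoods of untouched branch sets only shrink. Your ``off-by-one is absorbed'' remark does not quite work as stated, since the shared root leaves $1+|\mathcal{K}|(2\log h+2)\ell$. The leftover $+1$ disappears if you use that $\mathcal{K}\cup\{\tilde C\}$ is itself a clique minor model, so $|\mathcal{K}|\le h-2$ before $\tilde C$ is added.
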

\begin{proof}
  We prove by induction; the base case is when $\cK = \{x\}$, $X = \varnothing$, and $H$ is the largest connected component of $G\setminus \{x\}$. Then $F(\cK,H) = \{x\}$ by definition. Thus, all properties are trivially satisfied.

  Now we consider any iteration of the algorithm. In Step 1, if the algorithm returns a separator (in line 6), then it terminates, and all the properties follow from induction. Thus, we only consider Steps 2-4 where $H$ and $\cK$ are updated. Note that whenever the algorithm updates $\cK$ by either adding a new branch set (line 13) or growing an existing branch set in $\cK$ (line 21), it removes the vertices added to $\cK$ from $H$, and therefore $\cK$ in $H$ are always vertex disjoint, giving property 2. Observe that property 3 follows from line 27, where we remove from $\cK$ any branch set that has no neighbor in $H$. Property 5 follows from the fact that whenever we remove a set of vertices from $H$ (lines 14, 22, 26), these vertices are either added to a branch set in $\mathcal{K}$ or $X$. Note that in line 27, we only remove from $\cK$ branch sets that are not adjacent to $H$ and therefore, to $\bdry H$. It remains to show properties 1 and 4. 

  In Step 2, we add a new branch set $\Tilde{C}$ to $\mathcal{K}$. Lines 11 and 12 guarantee that $\Tilde{C}$ has an edge to every other branch set in $\cK$. Thus, property 1 holds.  To bound the size of $\Tilde{C}$, observe that $U\cup M^{+\ell}$ contains the first $\Delta + \ell^*+\ell = (2\log h + 2)\ell$ layers of the BFS tree rooted at $v$. Therefore, $|V(T[v,x_C])| \leq  (2\log h + 2)\ell$, giving $|V(\Tilde{C})|\leq |\cK| \cdot  (2\log h + 2)\ell \leq (h-1)  (2\log h + 2)\ell$. Thus, property 4 holds for $\Tilde{C}$.

  In both Steps 3 and 4, we do not add a new branch set to $\cK$ and therefore, property 1 follows directly from induction. Step 4 does not update $\cK$, and hence, we only need to verify property 4 after Step 3. Observe that $M^{+\ell}\setminus M$ has exactly $\ell$ layers. Thus, we have:
  \begin{equation*}
      |Y| \leq \frac{|M^{+\ell}\setminus M|}{\ell} \leq \frac{|B|}{\ell} \leq \frac{n}{h\ell}
  \end{equation*}
 as  $M^{+\ell}\setminus M\subseteq B$ and $|B|\leq n/h$. Since $W$ contains all the vertices in layers after $L_y = Y$ (line 19), the neighbors of $Z$ in the updated $H$ (after line 22) are in $Y$. Furthermore, since $N_H(C)\subseteq B\setminus M^{+\ell}$, $C$ has no neighbors in layers $L_{\leq y}$. Thus, in the updated $H$, $N_H(C)\subseteq Y$, giving $|N_H(C)| \leq |Y| \leq n/(h\ell)$ claimed in property 4.
\end{proof}

Now we are ready to prove \Cref{lm:Main}, which we restate below.

\mainLemma*
\begin{proof} The algorithm is in randomized polynomial time because of the LDD step. Otherwise, every step in each iteration can be implemented in a total of $O(m)$ time. Thus, the whole algorithm is in randomized polynomial time.

    Now we bound the size of the separator returned by \Cref{alg:separator}. In every iteration, we have:
    \begin{equation}\label{eq:FKH-size}
    \begin{split}
        F(\cK, H) &\leq \sum_{C\in \cK}\min\{|V(C)|, |N_H(C)|\} \\
        &\leq (h-1)\left((h-1)(2\log h+2)\ell + n/(h\ell)\right) \qquad \text{(by properties 1 and 4 in \Cref{lm:invariants})}\\
        &= O(n/\ell + \ell h^2\log h)
    \end{split}
    \end{equation}
    Observe that the algorithm only adds $L_{i^*}$ to $X$ in line 25, and there is no other update on $X$ in each iteration. We charge the size of $L_{i^*}$ to vertices in $L_{\geq i^*+1}$; every vertex is charged only $1/\ell$  $|L^*| \leq L_{\geq i^*+1}/\ell$ and is charged at most once during the entire algorithm, giving $|X|\leq n/\ell$. Thus, if the algorithm returns the separator in line 29, its size remains $O(n/\ell + \ell h^2\log h)$ by \Cref{eq:FKH-size}. Otherwise, it returns a separator in line 6, which adds a set $S$ of size $O(\log(h)n/(\ell \log h)) = O(n/\ell)$.  Thus, the total size of the separator remains in check.

    Finally, we show that the separator is balanced. Consider the last iteration where $|V(H)|\geq 2n/3$. For clarity, let $\Tilde{H}$ be the updated version of $H$ after this iteration; it holds that $|V(\Tilde{H})| < 2n/3$.  In Steps 1, 2, and 3, $\Tilde{H}$ is the largest connected component after we add a subset of vertices to the separator (in Step 3, the subset is $N_H(C)$). Since $|V(\Tilde{H})| < 2n/3$, every other connected component must have size at most $2n/3$ as well, and hence the separator is balanced. In Step 4, we only add $L_{i^*}$ to the separator, but we remove all vertices of $L_{\geq i^*}$. However, note that $|B_H(v,\Delta)|\geq 2n/3$ due to Step 1, at least  $2n/3$ vertices are in the first $\Delta$ layers, while vertices in $L_{\geq i^*}$ are in layers after $\Delta$. Thus, by adding $L_i^{*}$ to $X$, $\Tilde{H}$ is the same as the largest connected component of $H$ after removing only $L_i^{*}$ from $H$.   
\end{proof}

\paragraph{Towards a faster algorithm.}  While our algorithm described above calls LDD in every iteration, the whole analysis rests on the idea that there exists a vertex $v$ that contains $2n/3$ other vertices of $H$ within distance $\Delta = \ell \log(h)$. Instead of calling LDD to find such a vertex $v$, we sample a random vertex $s$, and with probability at least $2/3$, $s\in B_H(v, \Delta)$. Then, by the triangle inequality, every vertex in  $B_H(v, \Delta)\subseteq B_H(s, 2\Delta)$, and thus, we could use $s$ and distance threshold $2\Delta$ in place of $v$ and $\Delta$ in the above algorithm. Overall, we can replace all but one LDD calls by sampling and running BFS. The running time in each iteration becomes $O(m\log n)$, where the $\log n$  factor is to get a high probability. Observe that the number of iterations is  $O(h + n/(h\ell \log h) = O( h + \sqrt{n/\log h}) = O(\sqrt{n})$. Thus, the total running time is $O(m\sqrt{n}\log n)$ plus the running time of exactly one LDD call. The LDD in \Cref{cor:separation} is obtained by calling padded decomposition~\cite{CF25}, which can be implemented in $O(n^2)$ time. Implementing padded decomposition in $O(m)$ time (even in the case of constant $h$) remains an open problem. The LDD in \Cref{cor:separation} seems simpler than the padded decomposition and hence could be found faster than sampling from a padded decomposition. We leave this as an open problem for future work.

\section{A Flow-Based Approach}\label{sec:flow}

In this section, we construct a balanced separator for any $K_h$-minor-free graph of size $O(h\log h \sqrt{n})$ as in \Cref{thm:minor}. We adapt the technique of Korhonen and  Lokshtanov~\cite{KH24}, who constructed a balanced separator for induced-minor-free graphs based on the flow-cut duality~\cite{LR99,FHL05}. Here we observe that for  $K_h$-minor-free graphs, we can use the recent padded decomposition with a padding parameter $O(\log h)$ by Conroy and Filtser~\cite{CF25} to replace the $\log(n)$ factor with $\log(h)$ in the flow-cut duality. The rest of the proof is borrowed directly from Korhonen and  Lokshtanov~\cite{KH24}. We include all the details for completeness.

Let $\mathcal{P}_{st}(G)$ be the set of all simple paths between $s\not= t$ and $\mathcal{P}(G) = \cup_{(s,t)\in V\times V} \mathcal{P}_{st}(G)$  be the set of all simple paths in $G$. A \EMPH{concurrent flow} $\lambda: \mathcal{P}(G) \rightarrow \real_{\geq 0}$ is a function that assigns each path in $\mathcal{G}$ a non-negative value, a.k.a, the \EMPH{flow assigned to the path}, such that for ordered pair of vertices $(s,t)\in V\times V$, $\sum_{P\in \mathcal{P}_{st}(G)}\lambda(P) = 1$.  Thus, one can think of a concurrent flow as giving a distribution of paths between every pair of vertices. This is also the view we take in constructing a minor model when the algorithm fails to give a balanced separator. 

Let $\lambda(v)$ be the total amount of flow passing through $v$. More formally:
\begin{equation}
    \lambda(v)  = \sum_{P \in \mathcal{P}_{s,t}(G): ~v\in P,~ (s,t)\in V\times V} \lambda(P)
\end{equation}
The \EMPH{congestion} of the concurrent flow $\lambda$  is $\max_{v\in V}\lambda(v)$. The following lemma is a simple consequence of applying the padded decomposition for $K_h$-minor-free graphs with padding parameter $O(\log h)$ on top of the framework for vertex separator by Feige, Hajiaghayi, and Lee~\cite{FHL05}. %For completeness, we give a proof in \Cref{subsec:leighton-rao-proof}.

\begin{lemma}\label{lm:sep-vs-flow} Let $G$ be an $n$-vertex graph $K_h$-minor-free graph. For any parameter $\gamma > 0$,  there is a randomized polynomial-time algorithm that outputs either:
\begin{enumerate}
    \item  a balanced separator of $G$ of size $O(n^2\log h/\gamma)$, or
    \item  a concurrent flow of congestion $\gamma$ on  an induced subgraph $G'$ of $G$ such that $|V(G')|\geq 2n/3$.
\end{enumerate}
\end{lemma}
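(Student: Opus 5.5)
We follow the Feige--Hajiaghayi--Lee template for vertex separators, using LP duality between the maximum concurrent flow with vertex capacities and fractional vertex cuts (vertex lengths). The dual is rounded with the padded decomposition of \Cref{thm:padded-decomp} in place of Bourgain's embedding, so the gap becomes $O(\log h)$ instead of $O(\log n)$. The procedure is iterative.

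We maintain a set $R\subseteq V$ of removed vertices, initially empty. Let $G'$ be the largest component of $G\setminus R$; if $|V(G')|<2n/3$ we stop. Otherwise we solve the concurrent-flow LP on $G'$ with all vertex capacities equal to $\gamma$. If a feasible flow exists, this is outcome 2, since $G'$ is an induced subgraph with at least $2n/3$ vertices.

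If no feasible flow exists, LP duality gives nonnegative vertex lengths $d:V(G')\to\mathbb{R}_{\ge 0}$ with $\gamma\sum_v d(v) < \sum_{(s,t)} \mathrm{dist}_d(s,t)$. Here $\mathrm{dist}_d$ is the shortest-path metric in which a path costs the sum of $d$ over its vertices; we turn this into an edge-weighted metric by giving edge $uv$ weight $(d(u)+d(v))/2$.

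We normalize so that $\sum_v d(v)=1$. The pairwise distance sum then exceeds $\gamma$, so the average distance is at least $\gamma/|V(G')|^2 \ge \gamma/n^2$. A standard averaging argument (as in Leighton--Rao and FHL) shows that either some ball of small radius contains a constant fraction of the vertices, or many pairs are far apart. From this we extract two sets $A,A'$, each of size $\Omega(|V(G')|)$, at $d$-distance $\Omega(\gamma/n^2)$ from each other.

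We sample a $\Delta$-bounded partition from the padded decomposition with $\Delta$ of order $\gamma/n^2$. Because $K_h$-minor-freeness is closed under taking subgraphs, the edge-weighted graph $G'$ is $K_h$-minor-free, so \Cref{thm:padded-decomp} applies with $\beta=O(\log h)$. We cut every vertex $v$ whose ball of radius about $d(v)$ is not padded in its cluster. By \Cref{eq:padd-prob} with $\gamma_v = d(v)/\Delta$, $v$ is cut with probability at most $\beta d(v)/\Delta$. The expected number of cut vertices is therefore $O(\log h\cdot n^2/\gamma)\cdot\sum_v d(v)=O(n^2\log h/\gamma)$. Removing the cut vertices disconnects all edges between clusters, and $A$ and $A'$ lie in different clusters, so this is a sparse vertex cut separating a constant fraction of $G'$.

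We add these cut vertices to $R$, in the FHL/Korhonen--Lokshtanov fashion, and iterate. The difficulty is the total cost over all iterations. We handle it with a charging argument: each cut costs $O(\log h/\gamma)$ times $|V(G')|$ times the number of vertices split off from the large side. Summed over all iterations, this gives a total of $O(n^2\log h/\gamma)$, since each vertex is split off at most once.

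The main obstacle is this extraction-and-charging step: turning an arbitrary infeasible-flow dual into a cut whose size is proportional to the vertices separated, so that costs telescope. We will first try the FHL approach of choosing $\Delta$ relative to the average distance and using a ball-growing argument about a heavy center to guarantee balance.
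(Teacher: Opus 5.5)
\textbf{There is a genuine gap in the inner rounding step.} Your outer loop (recurse on the largest component, and charge each cut to the vertices it splits off) matches the paper. The problem is the extraction claim. It is not true that a certificate $d$ with $\sum_v d(v)=1$ and $\sum_{(s,t)}\mathrm{dist}_d(s,t)>\gamma$ always yields two sets of size $\Omega(|V(G')|)$ at distance $\Omega(\gamma/n^2)$.

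Here is a counterexample. Let $G'$ be a grid $Q$ on $n-k$ vertices with a pendant path $P$ of $k=n^{3/4}$ vertices hanging off one vertex.
\begin{itemize}
\item The attachment vertex must carry $\Theta(nk)=\Theta(n^{7/4})$ flow, while $Q$ alone routes with congestion $O(n^{3/2})$.
\item So for $\gamma$ between these two values, outcome 2 fails on $G'$ and a certificate exists (e.g.\ $d\equiv 1/k$ on $P$ and $d\equiv 0$ on $Q$).
\item Yet no certificate can give what you claim. Two far-apart linear-size sets would both meet $Q$ in linear size. Your padded cut would then separate them using $O(n^2\log h/\gamma)=O(n^{1/4})$ vertices, contradicting grid isoperimetry.
\end{itemize}
The right move here is to cut one vertex and split off $P$, which is sparse but unbalanced. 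Your per-iteration cost of $O(n^2\log h/\gamma)$ does not depend on how much is split off, so it cannot be charged at rate $O(n\log h/\gamma)$ per removed vertex. If far-apart linear-size pairs always existed, $O(1)$ iterations would suffice and no charging would be needed. Similarly, growing a ball around a heavy center yields sparsity, not balance.

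\textbf{What is missing is the heavy-ball half of the Leighton--Rao/Rabinovich dichotomy.} Let $\bar d$ be the average of $\mathrm{dist}_d$ over ordered pairs in $V(G')$, so $|V(G')|\,\bar d>\gamma/n$.
\begin{itemize}
\item \emph{Heavy ball.} Suppose some ball $T$ of radius $c\bar d$ (for a small constant $c$) contains half of $V(G')$. Take $f(u)=\mathrm{dist}_d(u,T)$, normalized so that $f\equiv 0$ on $T$ and $f(v)\le f(u)+d(v)$ along every edge $uv$. The triangle inequality gives $\sum_u f(u)=\Omega(|V(G')|\bar d)$. For a threshold $\rho>0$, the set $S_\rho=\{u: f(u)-d(u)<\rho\le f(u)\}$ separates $\{f<\rho\}\supseteq T$ from $B_\rho=\{f-d\ge\rho\}$. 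Since $\int|S_\rho|\,d\rho\le 1$ and $\int|B_\rho\cup S_\rho|\,d\rho=\sum_u f(u)$, some $\rho$ gives $|S_\rho|\le O(n/\gamma)\,|B_\rho\cup S_\rho|$. This cut is unbalanced, but it has exactly the form your charging needs.
\item \emph{No heavy ball.} Only in this case does your padded-decomposition step apply. Every cluster of weak diameter $\Delta\le c\bar d$ then has fewer than $|V(G')|/2$ vertices, so after the cut you are already balanced. You stop, paying $O(n^2\log h/\gamma)$ once. Vertices with $d(v)\ge\delta\Delta$ must be cut deterministically, since the padding bound does not apply to them.
\end{itemize}

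The paper gets both cases at once by citing two results. Rabinovich shows that padding parameter $O(\log h)$ gives a line embedding with average distortion $O(\log h)$. FHL Theorem~4.1 converts that embedding into a vertex cut of sparsity $O(\log h/\gamma)$. Then $|S_i|\le O(\log h/\gamma)\,|A_i\cup S_i|\,|B_i\cup S_i|\le O(n\log h/\gamma)\,|B_i\cup S_i|$, which telescopes to $O(n^2\log h/\gamma)$.
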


 The $\log(h)$ factor is best possible in \Cref{lm:sep-vs-flow}, and it factors directly into the size of the balanced separator. Thus, $O(h\log h \sqrt{n})$ is the barrier of the flow-based technique for balanced separators.

We will delay the proof of \Cref{lm:sep-vs-flow} to the end of this section, and continue with our algorithm for finding a balanced separator.  Let $\Ddot{H}$ be a graph obtained from $K_h$ by subdividing every edge twice and replacing each vertex of degree $h-1$ by a binary tree with $h-1$ leaves. Observe that if $G$ excludes $K_h$ as a minor, then $G$ also excludes $\Ddot{H}$ as a minor, e.g., Lemma 2 in \cite{BKLLM25}. We define an an \EMPH{almost minor embedding} of $\Ddot{H}$ to $G$ is a pair of mapping $(\phi, \pi)$ where $\phi: V(\Ddot{H})\rightarrow G$  and $\pi: E(\Ddot{H})\rightarrow \mathcal{P}(G)$ such that:

\begin{enumerate}
    \item for every edge $xy\in E(\Ddot{H})$,  $\pi(xy)$ is a simple $\phi(x)$-to-$\phi(y)$ path in $G$.
    \item for every two edges $xy$ and $uv$ in $\Ddot{H}$ that do not share any endpoint, $\pi(xy)\cap \pi(uv) =  \emptyset$. 
    \end{enumerate}

\FlowSeparator*
\begin{proof}
    Let $\gamma = \frac{n^{3/2}}{c\cdot h}$ for a sufficiently large constant $c$.  By applying the algorithm in \Cref{lm:sep-vs-flow} with parameter $\gamma$, we either find a balanced separator of size $O(h\log h \sqrt{n})$ of $G$ or a subgraph $G'$ of $G$ that has a concurrent flow of congestion $\gamma$. In the latter case, we can construct an almost minor embedding $\phi$ of $\Ddot{H}$ in $G'$, and therefore we can conclude that $G$ is not $K_h$-minor-free.  The construction is as follows:
  
\begin{itemize}
    \item  $\phi$ maps each vertex $u\in \Ddot{H}$ to a vertex $\phi(u)\in V(G')$ uniformly at random.
    \item  For every edge $uv\in \Ddot{H}$, $\pi(uv)$ will map $uv$ to a  path $P \in \mathcal{P}_{uv}(G')$ sampled from distribution induced by the concurrent flow $\lambda$. 
\end{itemize}  

Let $xy$ and $uv$ be two edges in $E(\Ddot{H})$ that do not share any endpoint. We say that $\pi(xy)$ and $\pi(uv)$ \EMPH{collide} if $\pi(xy)\cap \pi(uv) \not=\emptyset$. 

For every vertex $w$, we have:
\begin{equation}
    \begin{split}
        \Pr[w\in \pi(xy)]  &= \frac{1}{|V(G')|^2}\sum_{(a,b)\in V(G')\times V(G')}\sum_{w \in \mathcal{P}_{ab}(G')} \lambda(P)\\
        &= \frac{\lambda(w)}{|V(G')|^2} \leq \frac{9}{4n^2} \lambda(w) \leq \frac{9\gamma}{4n^2}
    \end{split}
\end{equation}
The penultimate inequality is due to $|V(G')|\geq 2n/3$, and the last inequality is by the definition of the congestion.   Since $xy$ and $uv$ do not share an endpoint,  $\pi(xy)$ and $\pi(uv)$ are two independent variables. Thus, we have:

\begin{equation}
     \Pr[w\in \pi(xy) \wedge w\in \pi(uv)]\leq \frac{81 \gamma^2}{16n^4}
\end{equation}
By taking the union bound over $w$, we have:

\begin{equation}\label{eq:colide-prob}
     \Pr[ \pi(xy) \text{ and } \pi(uv) \text{ collides}]\leq \frac{81 \gamma^2}{16n^3} 
\end{equation}

At this point, if we take the union bound over (at most) $h^4$ pairs of edges in $\Ddot{H}$, we would end up with the probability of having at least one collision is at most $\frac{81h^4\gamma^2}{16n^3}$. Thus, the probability that $\phi$ is an almost minor embedding of $\Ddot{H}$ is at least $1-\frac{81h^4\gamma^2}{16n^3}$, which is at least $1/2$ when $\gamma = O(n^{3/2}/h^2)$. However, we need  $\gamma =  O(n^{3/2}/h)$.

To this end, following Korhonen and  Lokshtanov~\cite{KH24}, we use the Lovász Local Lemma: Let $\mathcal{B}$ be a collection of bad events and let $p$ be such that $\max_{B\in \mathcal{B}}\Pr[B] \leq p$. Suppose that each bad event depends on at most $d$ other bad events. Then if $4dp\leq 1$, there is a non-zero probability that none of the bad events in $\mathcal{B}$ happens.

In our context, each bad event is $\pi(xy)\cap \pi(uv)\not = \emptyset$ for each non-adjacent pair.  The number of such pairs is $O(|E(\Ddot{H})|^2) = O(h^4)$. However, each bad event is only dependent on $O(|E(\Ddot{H})|) = O(h^2)$ other bad events since $\Ddot{H}$ has degree at most $3$. Thus, it suffices to have the probability in \Cref{eq:colide-prob} to be at most $1/(c'\cdot h^2)$ for a sufficiently large $c'$, giving $\gamma^{2}  = O(n^{3}/ h^2)$ and hence $\gamma \leq n^{3/2}/(c\cdot h)$, for some constant $c$, as desired. 
\end{proof}

It remains to prove \Cref{lm:sep-vs-flow}.  We denote a \EMPH{vertex cut} of $G$ by $(A,S,B)$, where (i) $A,B,S$ are vertex disjoint, (ii) $A\cup B\cup S = V$, and (iii) there is no edge between $A$ and $B$. Intuitively, $A$ and $B$ are obtained by removing $S$ from $G$. The \EMPH{sparsity} of $(A,S,B)$ is denoted by:
\begin{equation}
    \alpha_G(A,S,B) = \frac{|S|}{|A\cup S||B\cup S|}
\end{equation}

Let $\omega: V\rightarrow \real^+$ be a non-negative weight function on the vertex set of $G$. The weight function $\omega$ induces a metric on $V(G)$, denoted by $\langle G,\omega\rangle$, where the distance between any two vertices $u$ and $v$ is the shortest vertex-weighted distance, denoted by $\delta_{\omega}(u,v)$.  Let $f_{\omega}: V(G)\rightarrow \real$ be an embedding of $\langle G,\omega\rangle$ into the line; we say that $f_{\omega}$ is \EMPH{contracting} if $|f_{\omega}(u) - f_{\omega}(v)|\leq \delta_{\omega}(u,v)$ for every $u,v\in V$. We say that $G$ admits an embedding with an  \EMPH{average distortion} at most $D$ into the line if for every non-negative vertex weight function $\omega$, there exists a contracting embedding $f_{\omega}$ into $\real$ such that:

\begin{equation*}
    \sum_{(u,v)\in V\times V} |f_{\omega}(u) - f_{\omega}(v)|\geq \frac{1}{D} \sum_{(u,v)\in V\times V} \delta_{\omega}(u,v)
\end{equation*}

The following result by Feige, Hajiaghayi, and Lee~\cite{FHL05} relates embeddings into the line with low average distortion with sparsest vertex cut.

\begin{lemma}[Theorem 4.1.~\cite{FHL05}]\label{lm:avdsit-cut} If $G$ admits a low-distortion embedding into the line with distortion $D$ that is constructible in randomized polynomial time, then one can find in randomized polynomial time either: (1)  a concurrent flow with congestion $\gamma$ in $G$ or (2)  a vertex cut $(A,S,B)$ of sparsity $O(D/\gamma)$.
\end{lemma}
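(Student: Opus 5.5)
The plan is to use LP duality between vertex-capacitated concurrent flow and vertex-weighted metrics, and then round the dual metric to a vertex cut through the line embedding by threshold (sweep) cuts. First I fix a distance convention suited to vertex cuts. For a path $P$, let $\omega(P) = \sum_{w\in P}\omega(w) - \tfrac12(\omega(s)+\omega(t))$, counting each endpoint with half weight, and let $\delta_\omega(u,v)$ be the minimum of $\omega(P)$ over $u$-to-$v$ paths $P$. With this convention, adjacent vertices satisfy $\delta_\omega(u,v)\le \tfrac12(\omega(u)+\omega(v))$, which is exactly what the rounding step needs. If the embedding hypothesis is stated for the full-weight distance instead, I will only lose a constant factor, since the two distances differ by at most $\tfrac12(\omega(u)+\omega(v))$. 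The $O(\cdot)$ in the statement absorbs such constants.

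\emph{Step 1 (duality).} I write the concurrent flow LP: variables $\lambda(P)\ge 0$, constraints $\sum_{P\in\mathcal{P}_{st}(G)}\lambda(P)=1$ for every ordered pair $(s,t)$, and load $\lambda(w)\le\gamma$ for every vertex $w$. It is solvable in polynomial time, for example via an equivalent compact multicommodity formulation on the vertex-split digraph, or via the ellipsoid method with a shortest-path separation oracle on the dual. If it is feasible, I output case (1). Otherwise Farkas' lemma, equivalently the dual LP solved explicitly, yields weights $\omega\ge 0$ with
\[
\sum_{(u,v)\in V\times V}\delta_\omega(u,v) \;>\; \gamma\sum_{w\in V}\omega(w).
\]
The point is that every flow path is charged its weight, while each vertex absorbs at most $\gamma$ units of load. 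Any mismatch between endpoint-inclusive loads and half-weighted distances again costs only constant factors.

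\emph{Step 2 (embedding).} I apply the hypothesis to $\omega$ to obtain a contracting $f=f_\omega:V\to\mathbb{R}$ with
\[
\sum_{(u,v)}|f(u)-f(v)| \;\ge\; \frac1D\sum_{(u,v)}\delta_\omega(u,v) \;>\; \frac{\gamma}{D}\sum_w\omega(w).
\]

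\emph{Step 3 (sweep rounding).} Give each vertex the interval $I_v=[f(v)-\omega(v)/2,\,f(v)+\omega(v)/2]$. For a threshold $t\in\mathbb{R}$, set
\[
S_t=\{v: t\in I_v\},\qquad A_t=\{v\notin S_t: f(v)<t\},\qquad B_t=\{v\notin S_t: f(v)>t\}.
\]
For an edge $uv$, contraction gives $|f(u)-f(v)|\le \tfrac12(\omega(u)+\omega(v))$, so $I_u\cup I_v$ covers the whole segment between $f(u)$ and $f(v)$. Hence if $u\in A_t$ and $v\in B_t$, the threshold $t$ would lie in that segment and hence in $I_u$ or $I_v$, a contradiction. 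So $(A_t,S_t,B_t)$ is a vertex cut.

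Integrating over $t$ gives two bounds. First, $\int|S_t|\,dt=\sum_w\omega(w)$. Second, for every pair with $f(u)<f(v)$ and every $t$ in $[f(u),f(v)]$, we have $u\in A_t\cup S_t$ and $v\in B_t\cup S_t$. Therefore
\[
\int |A_t\cup S_t|\,|B_t\cup S_t|\,dt \;\ge\; \tfrac12\sum_{(u,v)}|f(u)-f(v)|.
\]
By averaging, some $t$ satisfies
\[
\alpha_G(A_t,S_t,B_t)\;\le\;\frac{2\sum_w\omega(w)}{\sum_{(u,v)}|f(u)-f(v)|}\;<\;\frac{2D}{\gamma}.
\]
The ratio only changes at the $O(n)$ interval endpoints, so scanning these finitely many values of $t$ finds such a cut deterministically. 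This gives case (2).

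I expect the main obstacle to be Step 1 together with the conventions. The dual LP must be set up so that vertex loads and path weights match exactly: endpoint contributions and the ordered-versus-unordered pair count must be aligned with the contracting condition used in Step 3. I would fix the half-endpoint convention first and verify the edge-covering property of Step 3. Then I would derive the dual directly from that convention, accepting constant-factor losses wherever the flow definition counts endpoints fully.
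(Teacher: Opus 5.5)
Your proposal is correct. It follows essentially the Feige--Hajiaghayi--Lee argument that the paper cites without reproducing: LP duality gives vertex weights $\omega$ with $\sum_{(u,v)}\delta_\omega(u,v) > \gamma\sum_w \omega(w)$ up to a constant, you apply the average-distortion line embedding to this $\omega$, and threshold cuts in which each vertex occupies an interval of length $\omega(v)$ around $f(v)$ give sparsity $O(D/\gamma)$. The endpoint-convention mismatch you flag is indeed harmless: for $u\neq v$ every $u$--$v$ path contains both endpoints, so the half-endpoint distance is at least half the endpoint-inclusive one, and the Farkas certificate transfers with only a factor-$2$ loss.
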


%\inlinenote{H: should introduce notation here.}

\begin{proof}[Proof of \Cref{lm:sep-vs-flow}]
Rabinovich~\cite{Rabinovich03} showed that if the shortest path metric (induced by edge weight function $w$) of $G$, denoted by $\delta_G(u,v)$, has a padded decomposition with padding parameter $\beta_G$, then the (edge-weight) metric $(V(G),\delta_G)$ admits an embedding into the line with average distortion $O(\beta_G)$. Observe that by setting $w(u,v) = \frac{\omega(u) + \omega(v)}{2}$, the shortest path metric w.r.t edge weight function $w$ is an embedding of the shortest path metric with vertex weight $\omega$ with distortion at most $2$. Furthermore, $\beta_G = O(\log h)$ for $K_h$-minor-free graphs~\cite{CF25}. Thus,  $\langle G,\omega\rangle$ is embeddable into the line with average distortion $O(\beta_G)$. The embedding can be constructed in randomized polynomial time w.h.p., as the padded decomposition is sampleable in polynomial time~\cite{CF25}. Thus, by \Cref{lm:avdsit-cut}, we either get a (1) concurrent flow with congestion $\gamma$ in $G$ or (2) a cut $(A,S,B)$ with sparsity $O(\log h/\gamma)$.

Note that the separator $S$ in the sparse cut $(A,S,B)$ might not be balanced. W.l.o.g, we assume that $A$ is the larger side, i.e., $|A| > 2n/3$. To get a balanced cut, the standard trick~\cite{FHL05,KH24} is to recurse on the subgraph induced by the larger side $G[A]$. We stop whenever we find a concurrent flow with congestion $\gamma$ in a subgraph of $G$ or the larger size of the cut has size at most $2n/3$.  Let $\{(A_i,S_i,B_i)\}_{i=1}^{t}$ be a sequence of sparse cuts in this process. In the latter case, the separator $\cup_{i=1}^t S_i$ is balanced and has size:

\begin{equation*}
    \begin{split}
        \sum_{i=1}^t |S_i| &= O(\log h/\gamma)\sum_{i=1}^t  |A_i||B_i| \\
        &\leq O(n\log h /\gamma) \sum_{i=1}^t  |B_i| \qquad \text{(as $|A_i|\leq n$)}\\
        &= O(n^2\log h/\gamma)
    \end{split}
\end{equation*}
since $\sum_{i=1}^t |B_i| \leq n$, concluding the proof.
\end{proof}

\paragraph{Acknowledgments.} We thank Jack Spalding-Jamieson for helpful conversations and comments.  The author is supported by NSF grant CCF-2517033 and NSF CAREER Award CCF-2237288.
\bibliographystyle{plain}
\bibliography{ref}

\end{document}